\theoremstyle{plain}
\newtheorem{theorem}{Theorem}[section]
\newtheorem{lemma}[theorem]{Lemma}
\newtheorem{proposition}[theorem]{Proposition}
\newtheorem{conjecture}[theorem]{Conjecture}
\newtheorem*{claim*}{Claim}
\theoremstyle{definition}
\newtheorem{definition}[theorem]{Definition}
\newtheorem{example}[theorem]{Example}
\newtheorem{remark}[theorem]{Remark}
\DeclareMathOperator*{\argmax}{arg\,max}
\DeclareMathOperator*{\inter}{int}
\newcommand{\R}{\ensuremath{\mathbb{R}}}
\newcommand\BR{{\mathcal B \! \mathcal R \!}}
\newcommand\bp{{\bar p}}
\newcommand\bq{{\bar q}}
\providecommand{\q}[1]{\lq#1\/\rq}
\newcommand\po[1]{{u^{#1}}}
\newcommand\poa{{\po{A}}}
\newcommand\pob{{\po{B}}}
\newcommand\hpo[1]{{{\hat u}^{#1}}}
\newcommand\hpoa{{\hpo{A}}}
\newcommand\hpob{{\hpo{B}}}
\newcommand\Pure[1]{{S^{#1}}}
\newcommand\Sig[1]{{\Sigma^{#1}}}
\newcommand\Siga{{\Sig{A}}}
\newcommand\Sigb{{\Sig{B}}}
\newcommand\BRa{{\BR_A}}
\newcommand\BRb{{\BR_B}}
\newcommand\Regx[2]{{R^{#1}_{#2}}}
\newcommand\Rega[1]{{\Regx{A}{#1}}}
\newcommand\Regb[1]{{\Regx{B}{#1}}}
\newcommand\indiffx[3]{{Z^{#1}_{{#2}{#3}}}}
\newcommand\indiffa[2]{{\indiffx{A}{#1}{#2}}}
\newcommand\indiffb[2]{{\indiffx{B}{#1}{#2}}}
\newcommand\nasha{{E^A}}
\newcommand\nashb{{E^B}}
\newcommand\nash{{(\nasha,\nashb)}}
\begin{document}
\title[Payoff Performance of Fictitious Play]{Payoff Performance of Fictitious Play}
\author[Georg Ostrovski]{Georg Ostrovski}
\address[Georg Ostrovski]{Mathematics Institute,
Zeeman Building, University of Warwick,
Coventry CV4 7AL}
\email{g.ostrovski@gmail.com}

\author[Sebastian van Strien]{Sebastian van Strien}
\address[Sebastian van Strien]{Department of Mathematics,
Imperial College London, London SW7 2AZ}
\email{s.van-strien@imperial.ac.uk}

\thanks{The authors would like to thank Christopher Harris, Sergiu Hart and Ulrich Berger
for useful discussions and comments.}

\begin{abstract}
We investigate how well continuous-time fictitious play in two-player games performs in terms of 
average payoff, particularly compared to Nash equilibrium payoff. We show 
that in many games, fictitious play outperforms Nash equilibrium on average or 
even at all times, and moreover that any game is linearly equivalent 
to one in which this is the case. Conversely, we provide conditions 
under which Nash equilibrium payoff dominates fictitious play payoff.
A key step in our analysis is to show that
fictitious play dynamics asymptotically converges the set of coarse correlated equilibria 
(a fact which is implicit in the literature).
\end{abstract}

\maketitle

Continuous-time fictitious play (FP) has been first introduced by Brown~\cite{Brown1949,Brown1951}
and it has since been a standard model for myopic learning, often used 
as a convenient reference algorithm due to its computational simplicity 
(see, for example,~\cite{Fudenberg1998,Young2005b}).
It has been shown to converge to Nash equilibrium in many important classes of games, such
as zero-sum games~\cite{Hofbauer1995}, non-degenerate $2 \times n$ games~\cite{Berger2005}, 
non-degenerate quasi-supermodular games with diminishing returns or of dimension 
$3 \times n$ or $4 \times 4$~\cite{Berger2008,Berger2007}, and others. 
On the other hand, convergence to Nash equilibrium (even when it is unique)
is not guaranteed, as demonstrated by Shapley's famous example~\cite{Shapley1964} of
a $3 \times 3$ Rock-Paper-Scissors-like game with a stable limit cycle for FP. 
Note that in  Rock-Paper-Scissors-like games with an attracting 
limit cycle, the limit cycle is generally not globally attracting: uncountably many orbits are still attracted to the 
Nash equilibrium, see \cite[Theorem 1.1]{VanStrien2010}. 

The question therefore arises whether in the non-convergent case the payoff
to the players along trajectories of FP compares favourably to 
Nash equilibrium payoff. In this paper we investigate the relation between
Nash equilibrium payoff and average payoff along FP trajectories. 
In particular, we show that in many two-player games, FP
may in the long run earn a higher payoff to both players than Nash equilibrium play, 
either on average, or even at all times. We also show that every two-player game
is \q{linearly equivalent} to one in which FP Pareto dominates
Nash equilibrium (at all times, along every non-equilibrium FP orbit). 
Conversely, we give conditions under which FP is dominated by Nash equilibrium
in terms of payoff, and show numerical examples for this (rather atypical) behaviour. 

The paper is organized as follows. 
In Section~\ref{sec:fp} we introduce basic notation. 
In Section~\ref{sec:limit_set} we analyse the limiting behaviour of 
FP dynamics and show that FP converges
to the so-called set of coarse correlated equilibria.
In Section~\ref{sec:fp_vs_ne} we use this
to compare the payoff along the limit sets with the Nash equilibrium payoffs. 
Ultimately, this allows us to show that every 
bimatrix game is linearly equivalent to one in which 
FP Pareto dominates Nash equilibrium and we discuss 
the conditions governing the payoff comparison of these two.
In Section~\ref{sec:ex_family} we present a particular family
of $3 \times 3$ games in which
FP yields higher average payoff
to both players than Nash equilibrium.
In Section~\ref{sec:fp_worse} we investigate the possibility of 
games in which Nash equilibrium play dominates FP. We also 
deduce conditions for this and numerically determine examples in which this
is the case. The discussion shows that these examples are relatively \q{rare}. 
Finally, in Section~\ref{sec:conclusion_fp_performance} we discuss
the implications of these results for the notions of 
equilibrium (in the context of payoff performance of learning algorithms)
and game equivalence.

\section{Notation and standard facts}
\label{sec:fp}

For $A = (a_{ij}),B=(b_{ij}) \in \R^{m \times n}$, we denote by $(A,B)$ 
a \emph{bimatrix game} with players A and B
having \emph{pure strategies} $\Pure{A} = \{1,\ldots,m\}$ and $\Pure{B} = \{1,\ldots,n\}$. 
We call $S = \Pure{A} \times \Pure{B}$ the \emph{joint strategy space}, 
and we call a probability distribution over $S$ a \emph{joint probability distribution}.
By $\Siga \subset \R^{1\times m}$ and $\Sigb \subset \R^{n \times 1}$ 
we denote the $(m-1)$- and $(n-1)$-dimensional simplices of \emph{mixed
strategies} of the two players, and we implicitly identify the pure strategy
$i \in \Pure{A}$ with the $i$th unit vector in $\R^{1 \times m}$ and $j \in \Pure{B}$
with the $j$th unit vector in $\R^{n \times 1}$. We write $\Sigma = \Siga \times \Sigb$ for the
space of \emph{mixed strategy profiles}. Note that this can be seen as a proper subset
of the set of joint probability distributions.

The \emph{payoffs} to players A and B from playing the 
pure strategy profile $(i,j) \in \Pure{A} \times \Pure{B}$ are $a_{ij}$ and $b_{ij}$, respectively. 
By linearity, their expected payoffs from playing a mixed strategy 
profile $(x,y) \in \Sigma = \Siga \times \Sigb$ are 
\[
\poa(x,y) = x A y \quad \text{ and } \quad \pob(x,y) = x B y.
\]

The players' \emph{best response correspondences} $\BRa \colon \Sigb \to \Siga$ 
and $\BRb \colon \Siga \to \Sigb$ are given by 
\[
\BRa(q) \coloneqq \argmax_{\bp \in \Siga}  \bp A q \quad 
\text{ and } \quad \BRb(p) \coloneqq \argmax_{\bq \in \Sigb} p B \bq.
\]
We further denote the maximal-payoff functions 
\[
\bar A(q) \coloneqq \max_{ \bp \in \Siga} \bp A q 
\quad \text{ and } \quad \bar B(p) \coloneqq \max_{\bq \in \Sigb} p B \bq, 
\]
so that $\bar A(q) = \poa(\bp,q)$ 
for $\bp \in \BRa(q)$ and $\bar B(p) = \pob(p,\bq)$ 
for $\bq \in \BRb(p)$. Observe that
$\bar A(q) = \max_i\,(A q)_i$  and $\bar B(p) = \max_j\,(p B)_j$: 
the maximal payoff to player A given player B's strategy $q$ is 
equal to the maximal entry of the vector $A q$, and similarly for player B.

For generic bimatrix games, the best response correspondences
$\BRa \colon \Sigb \to \Siga$ and $\BRb \colon \Siga \to \Sigb$ 
are almost everywhere single-valued, with the exception of a finite number of hyperplanes. 
The singleton value taken by $\BRa$ whenever it is single-valued is always a pure strategy of player A.
When $\BRa(p)$ is not a singleton, it is the set of convex combinations of a subset of 
$\{ e_i \colon i \in \Pure{A}\}$, that is, a face of the simplex $\Siga$, or possibly all of $\Siga$.
The analogous statement holds for $\BRb$. 

It follows that $\Siga$ and $\Sigb$ can be divided into respectively $n$ and $m$ 
regions (in fact, closed convex polytopes):
\begin{equation*}
\begin{split}
\Regb{j} &\coloneqq \BRb^{-1}(j) \subseteq \Siga \qquad \text{for } j \in \Pure{B}, \\
\Rega{i} &\coloneqq \BRa^{-1}(i) \subseteq \Sigb \qquad \text{for } i \in \Pure{A}.  
\end{split}
\end{equation*}
We will call $\Rega{i}$ the \emph{preference region} of strategy $i$ for player A, as
it is the (closed) subset of the second player's strategies against which player A expects the highest
payoff by playing strategy $i$; similarly, for $\Regb{j}$. 

For a generic game $(A,B)$, the subset of $\Sigb$ on which $\BRa$ 
contains two distinct pure strategies $i,i' \in \Pure{A}$ 
(and hence all their convex combinations)
is contained in a codimension-one hyperplane of $\Sigb$:
\[
\indiffa{i}{i'} \coloneqq \{ q \in \Sigb : (A q)_i = (A q)_{i'} \geq (A q)_k ~\forall k \in \Pure{A}\} 
= \Rega{i} \cap \Rega{i'} \subseteq \Sigb.
\]
Analogously, for $j,j' \in \Pure{B}$,
\[
\indiffb{j}{j'} \coloneqq \{ p \in \Siga : (p B)_j = (p B)_{j'} \geq (p B)_l ~\forall l \in \Pure{B}\} 
= \Regb{j} \cap \Regb{j'} \subseteq \Siga.
\]
These hyperplanes are subsets of linear codimension-one subspaces of $\Sigb$ and $\Siga$, respectively.
See Figure~\ref{fig:game_geom} for an illustration in the case $n = m = 3$. 
We call these sets the \emph{indifference sets} of players A and B. 

\begin{figure}
\centering
\includegraphics[width=\textwidth,trim=60 0 20 0]{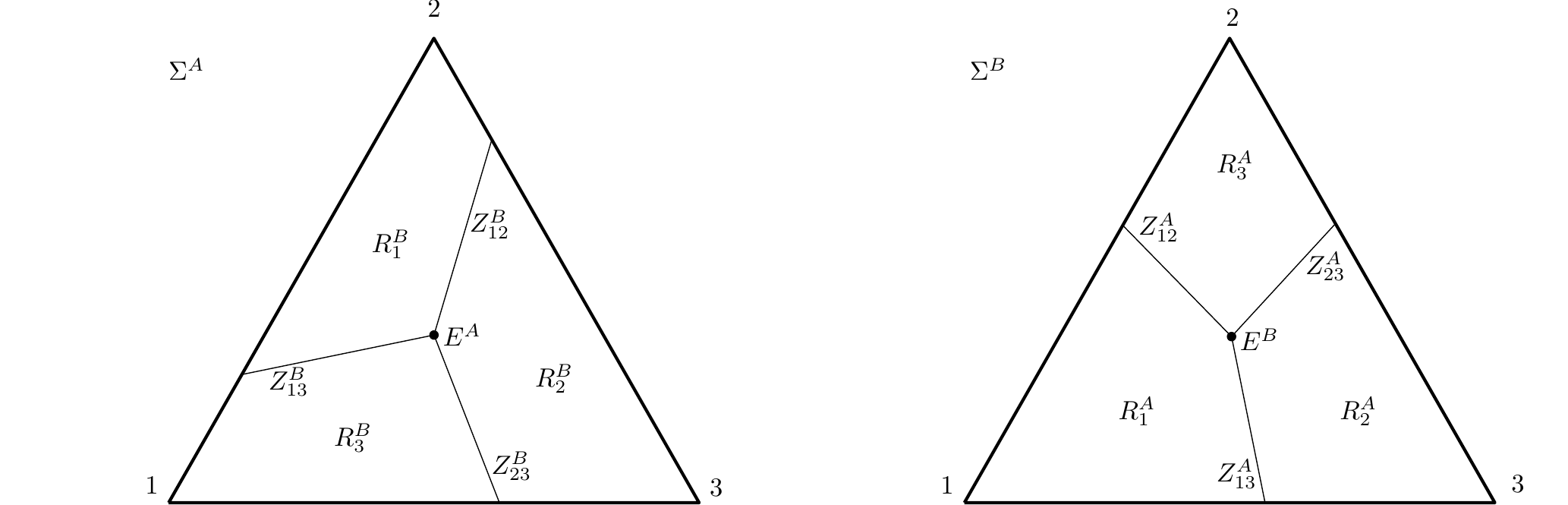}
\caption{Geometry of a $3 \times 3$ bimatrix game. The spaces of mixed strategies $\Siga$ and $\Sigb$ 
are each a simplex spanned by three vertices (the pure strategies). Note the closed convex 
preference regions $\Regb{j} \subset \Siga$ and $\Rega{i} \subset \Sigb$, their intersections as 
indifference sets $\indiffb{j}{j'}$ and $\indiffa{i}{i'}$, and the projections to $\Siga$ and $\Sigb$
of the (in this case, unique) Nash equilibrium $\nash$ at the intersection of all these sets.}
\label{fig:game_geom}
\end{figure}

\begin{definition}
A mixed strategy profile $(\bp, \bq) \in \Sigma$ is a \emph{Nash equilibrium}, if
$\bp \in \BRa(\bq)$ and $\bq \in \BRb(\bp)$. 
If a Nash equilibrium lies in the interior of $\Sigma$, it is called \emph{completely mixed}. 
\end{definition}

The following lemma is a standard fact and easy to check.

\begin{lemma}\label{lem:ne_int}
 The point $\nash \in \inter (\Sigma)$ is a 
 (completely mixed) Nash equilibrium of an $m \times n$ bimatrix 
 game $(A,B)$ if and only if, for all $i, i' = 1, \ldots, m$ and $j,j' = 1, \ldots, n$,
\[
 (A \nashb)_i = (A \nashb)_{i'} \quad \text{ and } \quad (\nasha B)_j = (\nasha B)_{j'}.
\]
\end{lemma}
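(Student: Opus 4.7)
The plan is to prove the two implications of the ``if and only if'' separately, exploiting the linearity of the payoff functions in each player's strategy together with the full-support condition that comes from $\nash \in \inter(\Sigma)$.

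For the forward direction, I would start from the Nash equilibrium condition $\nasha \in \BRa(\nashb)$, which says that $\nasha$ maximizes $\bp \mapsto \bp A \nashb$ over $\bp \in \Siga$. Since this objective is linear, writing $\nasha = \sum_i \nasha_i e_i$ gives $\nasha A \nashb = \sum_i \nasha_i (A\nashb)_i$, which can equal $\bar A(\nashb) = \max_i (A\nashb)_i$ only if $(A\nashb)_i = \bar A(\nashb)$ for every index $i$ in the support of $\nasha$. Because $\nash$ is completely mixed, $\nasha_i > 0$ for all $i$, so the equalities $(A\nashb)_i = (A\nashb)_{i'}$ must hold for all $i,i'$. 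The analogous argument applied to player B's condition $\nashb \in \BRb(\nasha)$ and the vector $\nasha B$ gives the second family of equalities.

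For the converse, I would simply observe that if $(A\nashb)_i$ is independent of $i$, then for any $\bp \in \Siga$ one has $\bp A \nashb = \sum_i \bp_i (A\nashb)_i = (A\nashb)_1$, so every mixed strategy of player A is a best response to $\nashb$; in particular $\nasha \in \BRa(\nashb)$. Symmetrically, $\nashb \in \BRb(\nasha)$, and hence $\nash$ is a Nash equilibrium. No additional argument is needed to see that it is completely mixed, since that is part of the hypothesis $\nash \in \inter(\Sigma)$.

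There is no real obstacle here: the proof is essentially the linear-programming observation that maximizing a linear functional over a simplex forces equality of the functional on the support of any interior maximizer. The only point that deserves a careful sentence is the passage from ``$\nasha$ is a best response'' to ``every pure strategy in its support is a best response,'' which is where the completely mixed assumption is used; I would make this explicit by writing out the convex combination as above rather than appealing to it implicitly.
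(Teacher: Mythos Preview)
Your argument is correct and is exactly the standard verification the paper has in mind; the paper itself does not spell out a proof but simply remarks that the lemma ``is a standard fact and easy to check.'' The two directions you outline---using full support of $\nasha$ and $\nashb$ to force equality of all coordinates of $A\nashb$ and $\nasha B$ in the forward direction, and the trivial linearity observation for the converse---constitute precisely that easy check.
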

Note that this implies that $\nasha \in \Regb{j}$ and $\nashb \in \Rega{i}$, for all $i,j$.

From the various ways to define continuous-time FP, we follow the 
approach taken in~\cite{Harris1998}.
We define a \emph{continuous-time fictitious play process} 
$(p(t),q(t)) \in \Sigma$, $t \geq 1$, 
as a solution to the differential inclusion 
\begin{equation}\label{eq:fp}
\dot p (t) \in \frac{1}{t} (\BRa(q(t)) - p(t)), \quad 
\dot q (t) \in \frac{1}{t} (\BRa(p(t)) - q(t)),
\end{equation}
with some initial condition $(p(1),q(1)) \in \Sigma$ (see, for example,~\cite{Harris1998,Hofbauer1995}).

Alternatively, as in~\cite{Harris1998}, we can denote 
by $x (t)$ and $y (t)$ the strategies played by the two players at time $t \geq 0$, 
where $x \colon [0,\infty) \rightarrow \Siga$ and 
$y \colon [0,\infty) \rightarrow \Sigb$ are assumed to be measurable functions. 
We write the average \emph{(empirical) past play} 
of the respective players from time $0$ through $t$ as
\[
p(t) \coloneqq \frac{1}{t}\int_0^t x(s)\, ds \quad 
\text{ and } \quad  q(t) \coloneqq \frac{1}{t}\int_0^t y(s)\, ds. 
\]
  Then continuous-time FP is 
  given by the rule expressed in the following integral inclusions:
\[
  x(t) \in \BRa( q(t) )\quad \text{ and } \quad  y(t) \in \BRb( p(t) ) \qquad \text{for } t \geq 1
\]
  and $(x(t),y(t)) \in \Sigma$ arbitrary for $0 \leq t < 1$. 
  Defined this way, $(p(t),q(t))$, $t \geq 1$, is a solution of the differential inclusion~\eqref{eq:fp}
  with initial condition $p(1) = \int_0^1 x(s)\,ds$ and $q(1) = \int_0^1 y(s)\,ds$.

\begin{definition} \label{def:equivalence}
We say that two $m \times n$ bimatrix games $(A,B)$ and $(\tilde A,\tilde B)$ 
are \emph{(linearly) equivalent}, $(A,B) \sim (\tilde A,\tilde B)$, if 
the matrix $\tilde{A}$ can be obtained by multiplying $A$
with a positive constant $c > 0$ and adding constants $c_1,\ldots,c_n \in \R$ to the matrix columns, 
and $\tilde B$ can be obtained from $B$ by multiplication with $d > 0$ 
and addition of $d_1, \ldots,d_m \in \R$ to its rows:
\[
{\tilde a}_{ij} = c \cdot a_{ij} + c_j \quad \text{ and } \quad {\tilde b}_{ij} = d \cdot b_{ij} + d_i
\qquad \text{for } i = 1,\ldots,m \text{ and } j = 1,\ldots,n.
\]
\end{definition}

The following lemma follows by direct computation. 

\begin{lemma}\label{lem:lin_equiv_br_equiv}
Let $(A,B)$ and $(\tilde A,\tilde B)$ be two $m \times n$ bimatrix games. 
If $(A,B)$ and $(\tilde A,\tilde B)$ are linearly equivalent, 
then their best response correspondences coincide: $\BRa \equiv \BR_{\tilde A}$
and $\BRb \equiv \BR_{\tilde B}$.  
In particular, they have the same Nash equilibria, the same preference regions and 
indifference sets, and give rise to the same FP dynamics. 
\end{lemma}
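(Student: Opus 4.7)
The plan is to establish the equality of best response correspondences by a direct computation, and then observe that every other object mentioned in the conclusion is built out of those correspondences alone, so the remaining assertions come for free.

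For the computation, I would fix $q \in \Sigb$ and an arbitrary $i \in \Pure{A}$ and expand, using $\sum_j q_j = 1$,
\[
(\tilde{A} q)_i = \sum_j (c \cdot a_{ij} + c_j)\, q_j = c \cdot (A q)_i + \sum_j c_j q_j.
\]
The key observation is that the final sum depends on $q$ but not on the row index $i$; it acts as a common additive shift across all pure strategies of player A. Combined with $c > 0$, this means that $i \mapsto (\tilde A q)_i$ is a positive affine transformation, uniform in $i$, of $i \mapsto (A q)_i$, so both functions attain their maximum on the same subset of $\Pure{A}$. As noted in the paragraph preceding the definition of the preference regions, $\BRa(q)$ and $\BR_{\tilde A}(q)$ are each the convex hull of this maximizing set, so they coincide. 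The analogous computation with $\tilde b_{ij} = d \cdot b_{ij} + d_i$ and $d > 0$, using $\sum_i p_i = 1$, yields $\BR_{\tilde B}(p) = \BRb(p)$ for every $p \in \Siga$.

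Once $\BRa \equiv \BR_{\tilde A}$ and $\BRb \equiv \BR_{\tilde B}$ are in hand, the remaining assertions are immediate from the way the relevant objects were defined: Nash equilibria are fixed points of the pair of best response correspondences (Definition~\ref{def:equivalence} notwithstanding, this is the content of the preceding definition of Nash equilibrium); the preference regions $\Regb{j}$ and $\Rega{i}$ are preimages of pure strategies under these correspondences; the indifference sets are intersections of preference regions; and the fictitious play differential inclusion~\eqref{eq:fp} references the games only through $\BRa$ and $\BRb$. There is no substantive obstacle in the argument; the one subtle point is that the additive constants $c_j$ (respectively $d_i$) are allowed to vary with the opponent's pure strategy, and one must use that $q$ (respectively $p$) is a probability vector to see that they collapse into a term that is constant across the own player's pure strategies.
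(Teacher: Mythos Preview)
Your argument is correct and is exactly the direct computation the paper has in mind (the paper omits the proof entirely, stating only that ``the following lemma follows by direct computation''). One small remark: you do not actually need $\sum_j q_j = 1$ anywhere---the term $\sum_j c_j q_j$ is independent of the row index $i$ for any vector $q$, so the probability constraint plays no role in the collapse you describe.
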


We call two bimatrix games giving rise to the same best response correspondences \emph{dynamically equivalent}.

\section{Limit set for FP} 
\label{sec:limit_set}

In this section we study the long-term behaviour of (continuous-time) FP. 
It has been known since Shapley's famous version of 
the Rock-Paper-Scissors game~\cite{Shapley1964}
that FP does not necessarily converge 
to a Nash equilibrium even when the latter is unique, 
and can converge to a limit cycle instead.
In fact, convergence to a unique Nash equilibrium in the interior of $\Sigma$ 
seems to be rather the exception than the rule: 
It is a standing conjecture that such Nash equilibrium can only 
be stable for FP dynamics,
if the game is equivalent to a zero-sum game \cite{Hofbauer1995}. 
As an aside, we remark that applying a {\lq}spherical{\rq} projection of 
the dynamics of a zero-sum games with a unique interior Nash Equilibrium 
(projecting from the Nash Equilibrium onto the boundary of the simplex), gives rise to 
Hamiltonian dynamics, see \cite{VanStrien2011}.

We will show that every FP orbit converges to 
(a subset of) the set of so-called \q{coarse correlated equilibria}, 
sometimes also referred to as the \q{Hannan set}
(see~\cite{Hannan1957,Young2005b,Hart2005}). 
In fact, this result follows directly 
from the \q{belief affirming} property of FP\footnote{The authors thank
Sergiu Hart for pointing out this connection between Theorems~\ref{thm:lim_set}
and~\ref{thm:lim_av_payoff} and~\cite{Monderer1997,Fudenberg1995} when shown an early 
draft of this paper.}, 
shown in~\cite{Monderer1997}. 
However, to the best of our knowledge, the conclusion 
that FP has its limit set contained in the set of coarse correlated equilibria
has not been mentioned in the literature. 
We also provide a slightly different proof of this fact.

The following definition can be found in~\cite{Moulin1978}.

\begin{definition}\label{def:cce}
A joint probability distribution $P = (p_{ij})$ over $S$ 
is a \emph{coarse correlated equilibrium (CCE)} for the bimatrix game $(A,B)$ if 
\[
\sum_{i,j} a_{i'j} p_{ij} \leq \sum_{i,j} a_{ij} p_{ij}
\]
and
\[
\sum_{i,j} b_{ij'} p_{ij} \leq \sum_{i,j} b_{ij} p_{ij}
\]
for all $(i',j') \in S$.
The set of CCE is also called the \emph{Hannan set}.
\end{definition}

One way of viewing the concept of CCE is in terms of 
the notion of \emph{regret}. Let us assume that two players are (repeatedly or continuously)
playing a bimatrix game $(A,B)$, and let $P(t) = (p_{ij}(t))$ 
be the empirical joint distribution of their past play through time $t$, that is, 
$p_{ij}(t)$ represents the fraction of time of the strategy profile $(i,j)$
along their play through time $t$.
For $x \in \R$, let $[x]_+$ denote the positive part of $x$: $[x]_+ = x$ if $x > 0$,
and $[x]_+ = 0$ otherwise.
Then the expression 
\[
\left[\sum_{i,j} a_{i'j} p_{ij}(t) - \sum_{i,j} a_{ij} p_{ij}(t)\right]_+
\]
can be interpreted as the regret of the first player from not 
having played action $i' \in \Pure{A}$ throughout the entire past history of play. 
It is (the positive part of) the difference between player A's actual past 
payoff\footnote{Note that $\sum_{i,j} a_{ij} p_{ij}(t)$ and $\sum_{i,j} b_{ij} p_{ij}(t)$
are the players' average payoffs in their play through
time $t$.} and the payoff she would have received if she always played $i'$, given 
that player B would have played the same way as she did.
Similarly, $[\sum_{i,j} b_{ij'} p_{ij}(t) - \sum_{i,j} b_{ij} p_{ij}(t)]_+$ 
is the regret of the second player from not having played $j' \in \Pure{B}$. 
This regret notion is sometimes called \emph{unconditional} or \emph{external regret} 
to distinguish it from the \emph{internal} or 
\emph{conditional regret}\footnote{Conditional regret is the regret from not having 
played an action $i'$ whenever a certain action $i$ has been played, that is, 
$[\sum_j a_{i'j} p_{ij} - \sum_j a_{ij} p_{ij}]_+$ for some fixed $i \in \Pure{A}$.}.
In this context the set of CCE can be interpreted as 
the set of joint probability distributions with non-positive regret. 

It has been shown that there are learning algorithms with no regret, 
that is, such that asymptotically the regret of players playing according 
to such algorithm is non-positive for all their actions. 
Dynamically this means that if both players in a two-player game use a no-regret learning algorithm, 
the empirical joint probability distribution
of actions taken by the players converges to (a subset of) the set of CCE 
(\textit{not} necessarily to a certain point in this set). 

The concept of no-regret learning (also known as \emph{universal consistency}, 
see~\cite{Fudenberg1995}) and the first such learning algorithms 
have been introduced in~\cite{Blackwell1954,Hannan1957}. 
More such algorithms have been found later on and moreover algorithms with 
asymptotically non-positive \emph{conditional} regrets 
have been found (see, for example,~\cite{Foster1997,Hart2000,Hart2001a};
for good surveys see~\cite{Young2005b,Hart2005}). 

We now show that continuous-time FP converges to a subset of CCE, namely the subset 
for which equality holds for at least one $(i',j') \in \Pure{A} \times \Pure{B}$ 
in  (\ref{eq:hanan}). 

\begin{theorem}\label{thm:lim_set}
Every trajectory of FP dynamics~\eqref{eq:fp}
in a bimatrix game $(A,B)$ converges to a subset of the set of CCE, 
the set of joint probability distributions 
$P = (p_{ij})$ over $\Pure{A} \times \Pure{B}$ such that 
for all $(i',j') \in \Pure{A} \times \Pure{B}$
\begin{equation}
\sum_{i,j} a_{i'j} p_{ij} \leq \sum_{i,j} a_{ij} p_{ij} \quad \text{ and }\quad
\sum_{i,j} b_{ij'} p_{ij} \leq \sum_{i,j} b_{ij} p_{ij}, 
\label{eq:hanan}
\end{equation}
where {\em equality} holds {\em for at least one} $(i',j') \in \Pure{A} \times \Pure{B}$.
In other words, FP dynamics asymptotically 
leads to non-positive (unconditional) regret for both players. 
\end{theorem}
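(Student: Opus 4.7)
My plan is to prove the theorem via the ``belief affirming'' identity of~\cite{Monderer1997}: each player's average realised payoff converges to the best-response payoff against the opponent's empirical marginal. Let $(x(s),y(s))$ be measurable selections realising~\eqref{eq:fp}, so $x(s)\in\BRa(q(s))$ and $y(s)\in\BRb(p(s))$ almost everywhere. Introduce the empirical joint distribution
\[
P(t)=(p_{ij}(t)),\qquad p_{ij}(t)\coloneqq\frac{1}{t}\int_0^t \mathbf{1}_{\{x(s)=i\}}\,\mathbf{1}_{\{y(s)=j\}}\,ds,
\]
whose marginals are $p(t)$ and $q(t)$. In this notation $(Aq(t))_{i'}=\sum_{i,j}a_{i'j}p_{ij}(t)$, and the realised average payoff of player A is
\[
U_A(t)\coloneqq\sum_{i,j}a_{ij}p_{ij}(t)=\frac{1}{t}\int_0^t x(s)Ay(s)\,ds,
\]
and analogously $U_B(t)$, so that the CCE inequalities~\eqref{eq:hanan} become $(Aq(t))_{i'}\le U_A(t)$ and $(p(t)B)_{j'}\le U_B(t)$.

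The core of the argument is the identity
\[
t\,\bar A(q(t))-\int_0^t x(s)Ay(s)\,ds\equiv\text{const}\qquad\text{on }[1,\infty).
\]
I would verify it by differentiation: using $\dot q(t)=\tfrac1t(y(t)-q(t))$, the best-response property $x(t)Aq(t)=\bar A(q(t))$, and the envelope-type identity $\tfrac{d}{dt}\bar A(q(t))=x(t)A\dot q(t)$ (valid almost everywhere because $\bar A$ is a maximum of finitely many linear functions), the derivative of the left-hand side collapses to zero, and absolute continuity then gives the identity globally. Dividing by $t$ yields $\bar A(q(t))-U_A(t)=O(1/t)\to 0$, and by the symmetric argument $\bar B(p(t))-U_B(t)\to 0$.

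Since $\bar A(q(t))=\max_{i'}\sum_{i,j}a_{i'j}p_{ij}(t)$, the limiting identity says that player A's regret $\max_{i'}\sum_{i,j}a_{i'j}p_{ij}(t)-U_A(t)$ tends to zero, and analogously for B. The family $\{P(t)\}_{t\ge 1}$ lies in the compact simplex of joint distributions over $S$, so any accumulation point $P^*=\lim_kP(t_k)$ satisfies $\max_{i'}\sum_{i,j}a_{i'j}p^*_{ij}=\sum_{i,j}a_{ij}p^*_{ij}$ together with the analogous identity for B; both CCE inequalities follow. Equality in the first is attained at any best response $i^*$ of A to the limit marginal $q^*$, and analogously a $j^*$ exists for B, so the pair $(i^*,j^*)$ realises the joint equality claimed in the theorem.

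The main delicacy, which I would treat carefully, is justifying the envelope identity $\tfrac{d}{dt}\bar A(q(t))=x(t)A\dot q(t)$ at times when $q(t)$ lies on an indifference set $\indiffa{i}{i'}$, where $\bar A$ is not differentiable. When $q(t)$ crosses such a codimension-one set transversally, the set of such times has measure zero; when $q(t)$ tracks an indifference set over a positive-measure interval, $\bar A$ restricts to a single linear function on that affine set, so $x(t)A\dot q(t)$ agrees with $\tfrac{d}{dt}\bar A(q(t))$ regardless of which selection $x(t)\in\BRa(q(t))$ is made. In every case the envelope identity holds almost everywhere, which is all that is needed for the integration argument.
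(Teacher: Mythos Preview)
Your proof is correct and essentially identical to the paper's: both differentiate $t\,\bar A(q(t))$ via the envelope relation to obtain $x(t)Ay(t)$, integrate, divide by $t$, and then pass to accumulation points of $P(t)$. One cosmetic point: writing $p_{ij}(t)$ with indicators $\mathbf{1}_{\{x(s)=i\}}$ presumes $x(s)$ is always pure, whereas the paper sets $p_{ij}(t)=\tfrac{1}{t}\int_0^t x_i(s)y_j(s)\,ds$ to accommodate mixed best responses on indifference sets---but with this substitution the arguments coincide, and your treatment of the envelope identity at non-differentiability points is in fact more careful than the paper's bare appeal to the envelope theorem.
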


\begin{remark}
\begin{enumerate}
\item Note that an FP orbit $(p(t),q(t))$, $t \geq 1$, gives rise to a joint 
probability distribution $P(t) = (p_{ij}(t))$ via $p_{ij}(t) = \frac{1}{t} \int_0^t x_i(s)y_j(s)ds$.
When we say that FP converges to a certain set
of joint probability distributions, we mean that $P(t)$ obtained this way converges to this set.

\item In~\cite{Monderer1997} a stronger result is proved:
continuous-time FP is \q{belief affirming} or \q{Hannan-consistent}.
This means that it leads to asymptotically non-positive
unconditional regret for the player following it, \emph{irrespective of her opponent's play} 
(even if the opponent is playing according to a different algorithm). 
We will only need the weaker statement and provide our own proof for the reader's 
convenience.
\end{enumerate} 
\end{remark}

\begin{proof}[Proof of Theorem~\ref{thm:lim_set}]
We assume that we have an orbit of~\eqref{eq:fp}, $(p(t),q(t))$, $t \geq 0$.
Recall that $p(t) = \frac{1}{t}\int_0^t x(s)\, ds$
and $q(t) = \frac{1}{t}\int_0^t y(s)\, ds$, where 
$x \colon [0,\infty) \rightarrow \Siga$ and  $y \colon [0,\infty) \rightarrow \Sigb$
are measurable functions representing the players' strategies at any time $t \geq 0$, 
so that $x(t) \in \BRa( q(t) )$ and $y(t) \in \BRb( p(t) )$ for $t \geq 1$.

By the envelope theorem (see, for example,~\cite{Sydsaeter2008}), 
for $\bar p \in \BRa(q)$ we have that
\[
\frac{d\bar A(q)}{dq} = \frac{\partial \poa(p,q)}{\partial q}\bigg|_{p = \bar p} = \bar p A.
\]
Therefore, since $x(t) \in \BRa(q(t))$ for $t \geq 1$,
\[
\frac{d}{dt} \left(t \bar A( q(t) )\right) 
= \bar A(q(t)) + t \frac{d}{dt}\left(\bar A(q(t))\right)
= \bar A(q(t)) + t \cdot x(t)\cdot A\cdot \frac{dq(t)}{dt}.
\]
Using~\eqref{eq:fp}
and $\bar A(q(t)) = x(t) \cdot A \cdot q(t)$, it follows that
\[
\frac{d}{dt} \left(t \bar A( q(t) )\right)  
= \bar A(q(t)) +  x(t) \cdot A \cdot \left(y(t)-q(t) \right)
= x(t) \cdot A \cdot y(t)
\]
for $t \geq 1$. We conclude that for $T > 1$,
\[
\int_1^T x(t) \cdot A \cdot y(t) \, dt = T \bar A(q(T)) - \bar A (q(1)),
\]
and therefore
\[
\lim_{T\rightarrow\infty} \left( \frac{1}{T}\left( \int_0^T x(t) \cdot A \cdot y(t) \, dt\right) - \bar A(q(T))\right) = 0. 
\]
Note that
\[
\frac{1}{T} \int_0^T x(t) \cdot A \cdot y(t) \, dt = \sum_{i,j} a_{ij}p_{ij}(T),
\]
where $P(T) = (p_{ij}(T))$ is the empirical joint distribution 
of the two players' play through time $T$. 
On the other hand, 
\[\bar A(q(T)) = \max_{i'} \sum_j a_{i'j} q_j(T) 
= \max_{i'} \sum_{i,j} a_{i'j} p_{ij}(T).
\]
Hence, 
\[
\lim_{T\rightarrow\infty} \left( \sum_{i,j} a_{ij}  p_{ij}(T) - \max_{i'} \sum_{i,j} a_{i'j} p_{ij}(T) \right) = 0.
\]
By a similar calculation for $B$, we obtain
\[
\lim_{T\rightarrow\infty} \left( \sum_{i,j} b_{ij}  p_{ij}(T) - \max_{j'} \sum_{i,j} b_{ij'} p_{ij}(T) \right) = 0.
\]
It follows that any FP orbit converges to the set of CCE.
Moreover, these equalities imply that for a sequence
$t_k \to \infty$  so that $p_{ij}(t_k)$ converges, 
there exist $i',j'$ so that $\sum_{i,j} (a_{ij} - a_{i'j})  p_{ij}(t_k) \to 0$ 
and $\sum_{i,j} (b_{ij} -b_{ij'}) p_{ij}(t_k) \to 0$ as $k \to \infty$, proving convergence
to the claimed subset.
\end{proof}

Let us denote the average payoffs through time $T$ along an FP orbit as
\[
\hpoa(T) = \frac{1}{T} \int_0^T x(t) \cdot A \cdot y(t) \, dt  \quad \text{and} \quad
\hpob(T) = \frac{1}{T} \int_0^T x(t) \cdot B \cdot y(t) \, dt.
\]
As a corollary to the proof of the previous theorem we get the following useful result.

\begin{theorem}\label{thm:lim_av_payoff}
In any bimatrix game, along every orbit of FP dynamics we have
\begin{align*}
\lim_{T\rightarrow\infty} \left( \hpoa(T) - \bar A(q(T))\right) = 
\lim_{T\rightarrow\infty} \left( \hpob(T) - \bar B(p(T))\right) = 0.
\end{align*}
\end{theorem}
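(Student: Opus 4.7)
The statement is essentially a reformulation of an identity already derived in the proof of Theorem~\ref{thm:lim_set}, so the plan is to simply extract it. Recall that by the envelope theorem, for $t \geq 1$ we have $\frac{d}{dt}\bigl(t\,\bar A(q(t))\bigr) = x(t)\cdot A\cdot y(t)$, and the analogous identity holds for $\bar B$ with $x(t)\cdot B\cdot y(t)$ on the right. Integrating from $1$ to $T$ yields
\[
\int_1^T x(t)\cdot A\cdot y(t)\, dt \;=\; T\,\bar A(q(T)) \;-\; \bar A(q(1)),
\]
and the analogous equation for player B.

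Next, I would split the defining integral for $\hpoa(T)$ as
\[
\hpoa(T) \;=\; \frac{1}{T}\int_0^1 x(t)\cdot A\cdot y(t)\, dt \;+\; \frac{1}{T}\int_1^T x(t)\cdot A\cdot y(t)\, dt.
\]
Substituting the envelope-theorem identity gives
\[
\hpoa(T) - \bar A(q(T)) \;=\; \frac{1}{T}\int_0^1 x(t)\cdot A\cdot y(t)\, dt \;-\; \frac{1}{T}\,\bar A(q(1)).
\]
Since $x,y$ take values in the compact simplices $\Siga$, $\Sigb$, the integrand is bounded by a constant depending only on $A$, so the right-hand side is $O(1/T)$ and tends to $0$ as $T \to \infty$. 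The same argument applied to $B$ yields $\hpob(T) - \bar B(p(T)) \to 0$.

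There is really no obstacle here: the main work has already been done inside the proof of Theorem~\ref{thm:lim_set}, and all that remains is to avoid passing to the empirical joint distribution and instead read the identity directly as a statement about the difference $\hpoa(T) - \bar A(q(T))$. The only point worth flagging is that the FP differential inclusion~\eqref{eq:fp} is only prescribed for $t \geq 1$, which is precisely why the integral is split at $t=1$ and the contribution on $[0,1]$ is absorbed into the vanishing $O(1/T)$ remainder.
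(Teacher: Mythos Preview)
Your proposal is correct and follows exactly the paper's approach: the paper simply states this theorem as a corollary to the proof of Theorem~\ref{thm:lim_set}, and the identity $\lim_{T\to\infty}\bigl(\frac{1}{T}\int_0^T x(t)\cdot A\cdot y(t)\,dt - \bar A(q(T))\bigr)=0$ that you extract (and its analogue for $B$) is precisely what is established there. Your explicit splitting of the integral at $t=1$ and the $O(1/T)$ bound merely spell out the step the paper leaves implicit.
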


\begin{remark}
This formulation of the result shows why in~\cite{Monderer1997}
this property is called \q{belief affirming}. Since $\bar A(q(T))$ and $\bar B(p(T))$ 
can be interpreted as the players' expected payoffs given their respective
opponent's play $q(T)$ and $p(T)$, the above theorem says
that the difference between expected and actual average payoff
of each player vanish, so that asymptotically their \q{beliefs} are \q{confirmed}
when playing according to FP.
\end{remark}

\section{FP vs.~Nash equilibrium payoff} 
\label{sec:fp_vs_ne}

In this section we investigate the average payoff to players in a two-player 
game along the orbits of FP dynamics and compare it to the Nash equilibrium payoff 
(in particular, in games with a unique, completely mixed Nash equilibrium). 
We show that in contrast to the usual assumption that players should primarily attempt 
to play Nash equilibrium and that learning algorithms converging to Nash equilibrium are desirable, 
the payoff along FP orbits can in some games be better on average,
or even at all times Pareto dominate the Nash equilibrium payoff. 

Moreover, we demonstrate that to every bimatrix game with unique, 
completely mixed Nash equilibrium, there is a dynamically equivalent game for which 
this superiority of FP over Nash equilibrium holds. 

Throughout the rest of this section we will assume that all the games under consideration
have a unique, completely mixed Nash equilibrium point $\nash$ 
(it is a well-known fact that in such a game, both players 
necessarily have the same number of strategies). 
A first simple situation in which FP can improve upon such a Nash equilibrium is 
given by the following direct consequence of Theorem~\ref{thm:lim_av_payoff}. 

\begin{proposition} \label{prop:fp_opt}
Let $(A,B)$ be a bimatrix game with unique, completely mixed Nash equilibrium $\nash$. 
If $\bar A(q) \geq \bar A(\nashb)$ and $\bar B(p) \geq \bar B(\nasha)$ for all $(p,q) \in \Sigma$, 
then asymptotically the average payoff along FP orbits is 
greater than or equal to the Nash equilibrium payoff (for both players). 
\end{proposition}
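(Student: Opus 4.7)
The plan is to chain together two simple observations: first, that for a completely mixed Nash equilibrium the maximal-payoff functions $\bar A$ and $\bar B$ evaluated at the equilibrium components literally give the Nash payoffs, and second, that Theorem~\ref{thm:lim_av_payoff} lets us replace average FP payoffs by these maximal-payoff functions in the limit. The hypothesis then does the rest.

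More precisely, I would first invoke Lemma~\ref{lem:ne_int}: since $\nash \in \inter(\Sigma)$ is a Nash equilibrium, all entries of the vector $A\nashb$ coincide, and likewise for $\nasha B$. Hence
\[
\bar A(\nashb) = \max_i (A\nashb)_i = \nasha A \nashb = \pi_A(\nash),
\]
and analogously $\bar B(\nasha) = \pi_B(\nash)$. So the quantities $\bar A(\nashb)$ and $\bar B(\nasha)$ appearing in the hypothesis are exactly the Nash equilibrium payoffs to the two players.

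Next, take any FP orbit $(p(t),q(t))$ and apply the hypothesis pointwise along it: for every $T \geq 1$,
\[
\bar A(q(T)) \geq \bar A(\nashb) = \pi_A(\nash), \qquad \bar B(p(T)) \geq \bar B(\nasha) = \pi_B(\nash).
\]
Now Theorem~\ref{thm:lim_av_payoff} gives $\hpoa(T) = \bar A(q(T)) + o(1)$ and $\hpob(T) = \bar B(p(T)) + o(1)$ as $T \to \infty$. Combining these inequalities and limits yields
\[
\liminf_{T\to\infty} \hpoa(T) \geq \pi_A(\nash), \qquad \liminf_{T\to\infty} \hpob(T) \geq \pi_B(\nash),
\]
which is exactly the claimed asymptotic dominance.

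There is no real obstacle here; the statement is essentially a corollary of Theorem~\ref{thm:lim_av_payoff}. The only subtle point worth flagging in the write-up is the identification $\bar A(\nashb) = \pi_A(\nash)$ (and the analogous one for B), since a priori $\bar A(\nashb)$ is a maximum over pure strategies while $\pi_A(\nash)$ is computed against the mixed strategy $\nasha$; the completely mixed assumption together with Lemma~\ref{lem:ne_int} is what makes these agree.
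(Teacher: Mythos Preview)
Your proof is correct and matches the paper's approach: the paper does not spell out a proof but simply calls the proposition a direct consequence of Theorem~\ref{thm:lim_av_payoff}, which is exactly what you have done. One minor simplification: the identity $\bar A(\nashb) = \poa\nash$ follows immediately from $\nasha \in \BRa(\nashb)$ (the definition of Nash equilibrium) together with the definition of $\bar A$, so you need not invoke Lemma~\ref{lem:ne_int} for that step.
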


\begin{remark}
The hypothesis of this proposition, $\bar A(q) \geq \bar A(\nashb)$ and 
$\bar B(p) \geq \bar B(\nasha)$ for all $(p,q) \in \Sigma$, means that 
\[
\poa\nash = \min_{q \in \Sigb} \max_{p \in \Siga} p A q 
\quad \text{and} \quad 
\pob\nash = \min_{p \in \Siga} \max_{q \in \Sigb} p B q,
\]
that is, the Nash equilibrium payoff equals the minmax payoff of the players. 
For a non-zero-sum game this is a rather strong assumption, suggesting an unusually
bad Nash equilibrium in terms of payoff. However, as we will show in the next result, at 
least from a dynamical point of view, the situation is not at all exceptional.
\end{remark}

\begin{theorem}\label{thm:exist_le}
Let $(A,B)$ be an $n \times n$ bimatrix game with unique, completely mixed Nash equilibrium $\nash$.
Then there exists a linearly equivalent game $(A',B')$, for which
$\bar {A'}(q) > \bar {A'}(\nashb)$ and $\bar B'(p) > \bar B'(\nasha)$ 
for all $p \in \Siga \setminus \{ \nasha\}$ and $q \in \Sigb \setminus \{\nashb\}$.
\end{theorem}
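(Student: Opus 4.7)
The plan is to center each payoff matrix using the Nash equilibrium itself. By Lemma~\ref{lem:ne_int}, $A \nashb = \alpha \mathbf{1}$ and $\nasha B = \beta \mathbf{1}^T$ for some $\alpha, \beta \in \R$, and my candidate for the linearly equivalent game is
\[
A' := A - \mathbf{1}(\nasha A), \qquad B' := B - (B \nashb) \mathbf{1}^T,
\]
that is, subtract from every row of $A$ the $\nasha$-weighted average row, and symmetrically for the columns of $B$. In the language of Definition~\ref{def:equivalence} this corresponds to taking $c = d = 1$ with $c_j = -(\nasha A)_j$ and $d_i = -(B \nashb)_i$, so $(A',B') \sim (A,B)$.

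A direct computation, using $A \nashb = \alpha \mathbf{1}$ and hence $\nasha A \nashb = \alpha$, then gives
\[
\bar{A'}(q) = \max_i (A q)_i - \nasha A q, \qquad \bar{A'}(\nashb) = 0,
\]
so the desired strict inequality $\bar{A'}(q) > \bar{A'}(\nashb)$ reduces to $\max_i (A q)_i > \nasha A q$. Since $\nasha$ lies in the interior of $\Siga$, the right-hand side is a \emph{strict} convex combination of the entries of $A q$ (with positive weights summing to one), and is therefore strictly less than the maximum unless all entries of $A q$ coincide, i.e., unless $A q = c \mathbf{1}$ for some scalar $c$.

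The final step, which I expect to be the main (though modest) obstacle, is to rule out $q \in \Sigb \setminus \{\nashb\}$ satisfying $A q = c \mathbf{1}$; this is exactly where uniqueness of $\nash$ enters. If $A q = c \mathbf{1}$ then every pure strategy of player A is a best response to $q$, so $\nasha \in \BRa(q)$; and since $\nasha B = \beta \mathbf{1}^T$, every mixed strategy of B is a best response to $\nasha$, so $q \in \BRb(\nasha)$. Hence $(\nasha, q)$ would itself be a Nash equilibrium, and uniqueness forces $q = \nashb$. A symmetric argument, with $\nashb$ playing the role of $\nasha$ and the roles of $A$ and $B$ swapped, establishes the analogous strict inequality $\bar{B'}(p) > \bar{B'}(\nasha)$ for all $p \in \Siga \setminus \{\nasha\}$.
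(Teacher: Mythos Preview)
Your proof is correct and in fact more direct than the paper's. The paper proceeds geometrically: it invokes Lemma~\ref{lem:inteq} to locate the rays $L^A_k=\bigl(\bigcap_{i\neq k}\Rega{i}\bigr)\setminus\Rega{k}$, picks points $Q_k\in L^A_k$ forming a basis of $\R^n$, and then solves a linear system for the column constants $c_j$ so that $\bar{A'}(Q_1)=\cdots=\bar{A'}(Q_n)$; this makes the level sets of $\bar{A'}$ boundaries of nested simplices centred at $\nashb$, and a separate monotonicity argument along rays shows $\nashb$ is the strict minimiser. Your approach bypasses all of this by writing down the constants \emph{explicitly}, $c_j=-(\nasha A)_j$, after which $\bar{A'}(q)=\max_i(Aq)_i-\nasha Aq$ and the strict inequality reduces to the elementary fact that a convex combination with strictly positive weights is below the maximum unless all terms agree. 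The uniqueness of the Nash equilibrium then disposes of the degenerate case $Aq\in\R\mathbf{1}$ in one line, exactly as you indicate.

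What each approach buys: your argument is shorter, fully constructive, and avoids the auxiliary Lemma~\ref{lem:inteq} entirely. The paper's route, while heavier, develops the ray structure $L^A_k$, $L^B_l$ and the description of level sets of $\bar{A'}$ on each region $\Rega{k}$; this machinery is reused later (Lemma~\ref{lem:sub_nash}) to engineer the sub-Nash payoff cones $P_{A'}^-$, $P_{B'}^-$, so in the context of the whole paper that investment is not wasted. But for Theorem~\ref{thm:exist_le} in isolation, your proof is the cleaner one.
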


This result states that every bimatrix game with unique, 
completely mixed Nash equilibrium is linearly 
equivalent to one in which 
players are better off playing FP than playing 
the (unique) Nash equilibrium strategy. 
In the proof we will need the following lemma.

\begin{lemma}\label{lem:inteq}
Let $(A,B)$ be an $n \times n$ bimatrix game with unique, completely mixed Nash equilibrium $\nash$.
Then for each $k \in \Pure{A}$, 
$L^A_k \coloneqq \left(\bigcap_{i \neq k} \Rega{i}\right) \setminus \Rega{k}$ is non-empty. 
More precisely, $L^A_k$ is a ray from $\nashb$ in the direction $v^k$, 
such that any $(n-1)$ of the $n$ vectors $v^1,\ldots,v^n$ form a basis for the space 
$\{v \in \R^n \colon \sum_i v_i = 0 \}$. The analogous statement applies to
$L^B_l \coloneqq \left(\bigcap_{j \neq l} \Regb{j} \right) \setminus \Regb{l}$, $l \in \Pure{B}$.
\end{lemma}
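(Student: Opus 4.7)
The plan is to reduce the statement to linear algebra built around the map
$\tilde A \colon \R^n \to \R^n / \R \mathbf{1}$, $q \mapsto Aq \bmod \R \mathbf{1}$,
where $\mathbf{1} = (1,\ldots,1)^T$. By Lemma~\ref{lem:ne_int}, the vector $\nashb$
satisfies $(A\nashb)_1 = \cdots = (A\nashb)_n$, so it lies in the linear subspace
$W := \{q \in \R^n : (Aq)_1 = \cdots = (Aq)_n\}$. Uniqueness of the completely mixed
Nash equilibrium says that $W$ meets $\{q : \sum_j q_j = 1\}$ in only $\nashb$, forcing
$\dim W = 1$ and $W = \R \nashb$. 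Consequently, for $V := \{v \in \R^n : \sum_j v_j = 0\}$
we have $W \cap V = \{0\}$, so the restriction
$\tilde A|_V \colon V \to \R^n / \R \mathbf{1}$ is an isomorphism between two
$(n-1)$-dimensional spaces.

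I then construct the ray $L^A_k$. Define
$M_k := \{q \in \R^n : (Aq)_i = (Aq)_{i'} \text{ for all } i,i' \neq k\}$.
This is cut out by an $(n-2)$-element subset of the $(n-1)$ independent linear equations
defining $W$, so $\dim M_k = 2$, and since $\nashb \in M_k \setminus V$, the intersection
$M_k \cap V$ is one-dimensional; let $v^k$ span it. The affine line
$\ell(t) := \nashb + t v^k$ stays in $M_k$ and in the hyperplane $\sum_j q_j = 1$,
and by interiority of $\nashb$ it remains in $\Sigb$ for $|t|$ small enough. A direct
computation yields $(A \ell(t))_i - (A \ell(t))_k = t\bigl((A v^k)_i - (A v^k)_k\bigr)$
for $i \neq k$, and this coefficient is nonzero: otherwise $A v^k \in \R \mathbf{1}$
would place $v^k$ in $W \cap V = \{0\}$. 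Reversing the sign of $v^k$ if necessary,
we get $(A \ell(t))_i > (A \ell(t))_k$ strictly for all $t > 0$, so $L^A_k$ is precisely
the ray $\{\ell(t) : t > 0\} \cap \Sigb$ emanating from $\nashb$ in the direction $v^k$.

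For the basis property, observe that by construction $A v^k$ has all components equal
except possibly the $k$-th, so $A v^k = c_k \mathbf{1} + d_k e_k$ for scalars $c_k, d_k$
with $d_k \neq 0$, where $e_k$ denotes the $k$-th standard basis vector. Thus
$\tilde A(v^k) = d_k \overline{e_k}$ in $\R^n / \R \mathbf{1}$, where $\overline{e_k}$
is the coset of $e_k$. The classes $\overline{e_1}, \ldots, \overline{e_n}$ satisfy
exactly one linear relation, $\sum_k \overline{e_k} = 0$, so any $n-1$ of them form
a basis of $\R^n / \R \mathbf{1}$. Since each $d_k \neq 0$ and $\tilde A|_V$ is
an isomorphism, any $n-1$ of $v^1, \ldots, v^n$ form a basis of $V$, as claimed.
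The statement about $L^B_l$ follows by applying the same argument with $B$ and $\nasha$
replacing $A$ and $\nashb$.

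The main technical step is extracting the rank condition $\dim W = 1$ from the hypothesis
of a unique completely mixed Nash equilibrium; once this linear-algebraic reformulation
is in place, both the existence of the ray in the prescribed direction and the basis property
reduce to elementary manipulations with coset representatives in $\R^n / \R \mathbf{1}$.
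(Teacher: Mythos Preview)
Your proof is correct and follows essentially the same strategy as the paper's: both extract from uniqueness of the completely mixed Nash equilibrium the invertibility of a linear map (the paper's affine map $P\colon \R^{n-1}\to\R^{n-1}$ built from consecutive payoff differences, your restriction $\tilde A|_V\colon V\to\R^n/\R\mathbf{1}$), then invert suitable target vectors to produce the $v^k$ and read off the basis property from linear independence of those targets. Your quotient-space packaging is a coordinate-free rephrasing of the paper's explicit projection $\pi$ to $\R^{n-1}$; the one slightly loose phrase is that the equations for $M_k$ are an ``$(n-2)$-element subset'' of those for $W$, which is literally true only after choosing the $W$-equations as $(Aq)_i=(Aq)_{i_0}$ for a fixed $i_0\neq k$, but the conclusion $\dim M_k=2$ is correct in any case.
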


\begin{proof}
Define the projection 
\[
\pi \colon \left\{x \in \R^n : \sum_i x_i = 1\right\} \to \R^{n-1},\quad \pi(x_1,\ldots,x_n) = (x_1, \ldots, x_{n-1}),
\]
and note that $\pi$ is invertible with inverse 
\[
\pi^{-1}(y) = (y_1, \ldots, y_{n-1}, 1 - \sum_{k=1}^{n-1}y_k).
\]
For $q \in \Sigb$ we have that $\sum_{k=1}^n q_k = 1$ and therefore
\[
(A q)_i - (A q)_j 
= \sum_{k=1}^n (a_{ik} - a_{jk}) q_k
= \sum_{k=1}^{n-1} (a_{ik}-a_{jk}-a_{in}+a_{jn} ) q_k + (a_{in}-a_{jn}), 
\]
and we define the affine map $P\colon \R^{n-1} \to \R^{n-1}$ by
\[
P_l (x) = \sum_{k=1}^{n-1} (a_{l,k}-a_{l+1,k}-a_{l,n}+a_{l+1,n} ) x_k + (a_{l,n}-a_{l+1,n}),
\]
for $l = 1, \ldots, n-1$ (that is, $P_l(x) = A(\pi^{-1}(x))_l - A(\pi^{-1}(x))_{l+1}$).

Recall from Lemma~\ref{lem:ne_int} that for $(p,q) \in \Sigma$, 
$q = \nashb$ if and only if $(A q)_i = (A q)_j$ for all $i,j$, 
and $p = \nasha$ if and only if $(p B)_i = (p B)_j$ for all $i,j$. 
It follows that 
\[
P(x) = 0 \quad \text{ if and only if } \quad x = \pi(\nashb).
\]
In particular, the affine map $P$ is invertible and there is a unique vector 
$v^1 \in \{v \in \R^n : \sum_i v_i = 0\}$, 
such that $P(\pi(\nashb +v^1)) = w^1 \coloneqq (-1, 0, \ldots, 0)^\top$. 
Since $\nashb$ is in the interior of $\Sigb$, $x^1 = \nashb+s \cdot v^1 \in \Sigb$ 
for sufficiently small $s > 0$, and
we have that $P(\pi(x^1)) = (-s, 0, \ldots, 0)^\top$. 
By the definition of $P$, this means that 
\[
(A x^1)_1 < (A x^1)_2 = (A x^1)_3 = \cdots = (A x^1)_n.
\] 
Hence $x^1 \in L^A_1 = \left(\bigcap_{k \neq 1} \Rega{k}\right) \setminus \Rega{1}$. 
Note also that every $x \in L^A_1$ is of the form $\nashb + s \cdot v^1$ 
for some $s > 0$, that is, $L^A_1$ is a ray from the point $\nashb$. 

For $1 < k < n$, let $w^k$ be the vector in $\R^n$ with $(k-1)$th 
and $k$th entries equal to $1$ and $-1$ respectively, 
and all other entries equal to $0$. Then choose $v^k$ such that $P(\pi(\nashb+v^k)) = w^k$. 
Again for sufficiently small $s > 0$, we get $x^k = \nashb + s \cdot v^k \in L^A_k$.
Finally, for $k = n$, let $w^k = (0, \ldots, 0, 1)$ and proceed 
as above to get $v^n$ and $x^n = \nashb + v^n \in L^A_n$. 

Writing the affine map $P$ as $P(x) = M x + b$ for some invertible 
matrix $M \in \R^{(n-1)\times(n-1)}$ and $b \in \R^{n-1}$, we get  
\[
w^k = P(\pi(\nashb + v^k)) =  P(\pi(\nashb)) + M (v^k_1, 
\ldots, v^k_{n-1})^\top = M (v^k_1, \ldots, v^k_{n-1})^\top, \quad k = 1,\ldots,n.
\]
Since any $n-1$ of the vectors 
\[
w^1 = \begin{pmatrix} -1 \\ 0 \\ 0 \\ \vdots \\ 0   \end{pmatrix},\quad 
w^2 = \begin{pmatrix} 1 \\ -1 \\ 0 \\ \vdots \\ 0   \end{pmatrix},\quad
\ldots,\quad
w^{n-1} = \begin{pmatrix} 0  \\ \vdots \\ 0 \\ 1 \\ -1   \end{pmatrix},\quad
w^n = \begin{pmatrix} 0 \\ \vdots \\ 0 \\ 0 \\ 1   \end{pmatrix}
\]
are linearly independent and $M$ is invertible, 
it follows that any $n-1$ of the vectors $v^1,\ldots,v^n$ are linearly independent, as claimed. 

The same argument applied to the matrix $B^\top$ 
shows the analogous result for $L^B_l$, $l = 1,\ldots,n$, which finishes the proof. 
\end{proof}

\begin{proof}[Proof of Theorem~\ref{thm:exist_le}]
Let $A' \in \R^{n\times n}$, such that $a'_{ij} =a_{ij} + c_j$ for some $c = (c_1, \ldots, c_n) \in \R^n$. 
Then for any $q \in \Sigb$, 
\begin{equation}\label{eq:adash_a}
\bar {A'}(q) = \max_i\left(A' q\right)_i 
= \max_i\left(A q + \sum_{j=1}^n c_j q_j \cdot
\begin{pmatrix} 1 \\ \vdots \\ 1 \end{pmatrix}\right) _i = \bar A(q) + c \cdot q. 
\end{equation}

Observe that, restricted to $\Rega{k}$, level sets of $\bar A$ are precisely the 
$(n-2)$-dimensional hyperplane pieces in $\Sigb$ orthogonal to $\underline {a}_k$, the 
$k$th row vector of $A$:
\[
q - \tilde q \perp \underline {a}_k \Leftrightarrow
q \cdot \underline {a}_k = \tilde q \cdot \underline {a}_k \Leftrightarrow
\max_j(A q)_j = \max_j (A \tilde q)_j \qquad \text{for } q,\tilde q\in \Rega{k}.
\]
So all level sets of $\bar A$ restricted to $\Rega{k}$ are parallel hyperplane pieces. 
Figure~\ref{fig:H} illustrates this situation for the case $n = 3$.

\begin{figure}
\centering
\includegraphics[width = 0.5\textwidth]{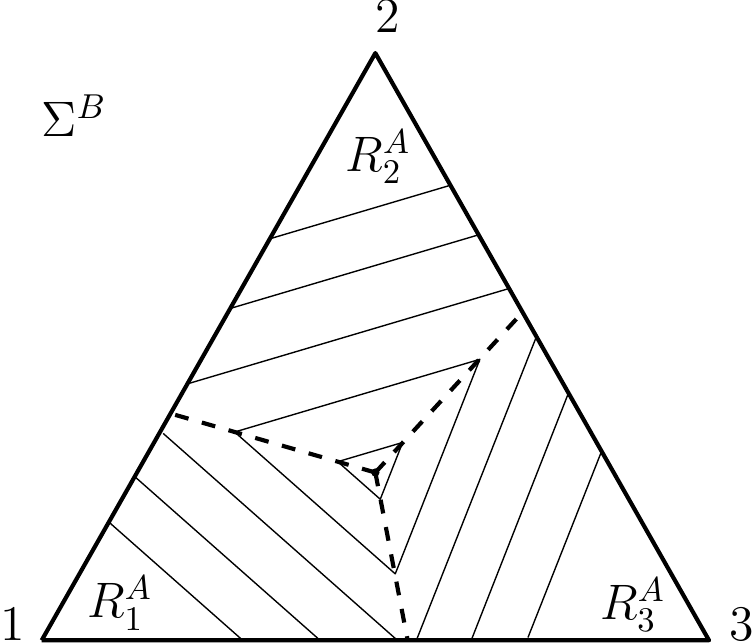}
\caption{(Proof of Theorem~\ref{thm:exist_le}) 
Level sets for $\bar A$ restricted to each region $\Rega{i}$ 
are parallel line segments in $\Sigb$ (in a $3 \times 3$ game).}
\label{fig:H}
\end{figure}

By Lemma~\ref{lem:inteq} we can choose $n$ points $Q_1, \ldots, Q_n \in \Sigb$ such that
\[
Q_k \in L^A_k = \left(\bigcap_{i \neq k} \Rega{i} \right) \setminus \Rega{k}. 
\]
Each point $Q_k$ is in the relative interior of the line segment $L^A_k \subset \Sigb$.
This line segment has endpoint $\nashb$ and is adjacent to all 
of the regions $\Rega{i}$, $i \neq k$. 
By the same lemma, $Q_1 - \nashb, \ldots, Q_{n-1}-\nashb$ form a basis for $\{v \in \R^n \colon \sum_k v_k = 0 \}$. 
Therefore, the vectors $Q_1, \ldots, Q_n$ form a basis for $\R^n$. 

It follows that one can choose $c = (c_1,\ldots, c_n) \in \R^n$, such that 
\[
c \cdot Q_1 + \bar A(Q_1) = \cdots = c \cdot Q_n + \bar A(Q_n),
\]
and hence by~\eqref{eq:adash_a},
\[
\bar {A'}(Q_1) = \cdots = \bar {A'}(Q_n).
\]
Then level sets of $\bar {A'}$ are boundaries of $(n-1)$-dimensional simplices centred at $\nashb$ 
(each similar to the simplex with vertices $Q_1, \ldots, Q_n$). 

\medskip

Now we show that $\nashb$ is a minimum for $\bar {A'}$. 
By uniqueness of the completely mixed Nash equilibrium and Lemma~\ref{lem:ne_int},
$A$ has a row vector which is not a multiple of $(1, \ldots, 1)$. 
Therefore, there exists a vector $v = (v_1, \ldots, v_n) \in \R^n$  with $\sum_k v_k = 0$,
such that at least one of the entries of $A v$ is positive. 
Let $r(t) = \nashb + t\cdot v$, $t \geq 0$, be a ray from $\nashb$ in $\Sigb$. Then for $t_2 > t_1$ we get
\[
\bar {A'}(r(t_2)) - \bar {A'}(r(t_1)) 
= \max_j (A \nashb + t_2 A v)_j - \max_j(A \nashb + t_1 A v)_j 
= (t_2 - t_1) \max_j (A v)_j > 0. 
\]
So, along some ray from $\nashb$, $\bar {A'}$ is increasing. By the spherical structure of the level sets, 
this implies that $\bar {A'}$ is increasing along every ray from $\nashb$. 
Hence $\bar {A'}(\nashb) \leq \bar {A'}(q)$ for every $q \in \Sigb$ with equality only for $q = \nashb$. 

The same reasoning shows that one can choose $d_1, \ldots, d_n \in \R$ and 
$B' \in \R^{n \times n}$, $b'_{ij} = b_{ij} + d_i$, 
such that $\bar {B'}(\nasha) \leq \bar {B'}(p)$ for every $p \in \Siga$ with equality only for $p = \nasha$.
\end{proof}

The previous results, Theorem~\ref{thm:exist_le} and Proposition~\ref{prop:fp_opt}, 
assert that every game possesses a dynamically equivalent version, 
in which FP Pareto dominates 
Nash equilibrium play. This shows that dynamical equivalence
does not in general preserve the global payoff structure 
of a game, since there are clearly games
for which Pareto dominance of FP over 
Nash equilibrium does not hold a priori.

In the famous Shapley game or variants of it~\cite{Shapley1964,Sparrow2007,VanStrien2010},
FP typically converges to a limit cycle, known as a 
Shapley polygon~\cite{Gaunersdorfer1995},
and usually the payoff along this polygon is greater than the Nash 
equilibrium payoff in some parts of the cycle, 
and less in others. On average, this can be still preferable for both 
players compared to playing Nash equilibrium, 
if they aim to maximise their time-average payoffs. In a similar setting, this 
has been previously observed in~\cite{Gaunersdorfer1995}. We will show 
an example of this situation in the next section. 

In fact, the proof of Theorem~\ref{thm:exist_le} shows that
the unique, completely mixed Nash equilibrium $\nash$ can never 
be an isolated payoff-maximum, since there are always 
directions from $\nashb$ in $\Sigb$ and from $\nasha$ in $\Siga$
along which $\bar A$ and $\bar B$ are non-decreasing. Heuristically 
one would therefore expect
that FP typically improves upon Nash equilibrium 
in at least parts of any limit cycle. 
In Section~\ref{sec:fp_worse} we will demonstrate that this is not always the case:
there are games in which FP 
typically produces a lower average payoff than Nash equilibrium.

\section{FP better than Nash equilibrium: an example} 
\label{sec:ex_family}

Consider the one-parameter family of 
$3 \times 3$ bimatrix games $(A_\beta, B_\beta)$, $\beta \in (0,1)$, given by 
\begin{equation}\label{eq:shapley_family}
A_\beta = \begin{pmatrix} 1 & 0 & \beta \\ \beta & 1 & 0 \\ 0 & \beta & 1\end{pmatrix}, \qquad
B_\beta = \begin{pmatrix} -\beta & 1 & 0 \\ 0 & -\beta & 1 \\1 & 0 & -\beta\end{pmatrix}.
\end{equation}
This family can be viewed as a generalisation 
of Shapley's game~\cite{Shapley1964}. In~\cite{Sparrow2007,VanStrien2010}, 
FP dynamics of this family of games has been studied extensively, 
and the system has been shown to give rise to a very rich chaotic dynamics
with many unusual and remarkable dynamical features. 
The game has a unique, completely mixed Nash equilibrium $\nash$, where 
$\nasha = (\nashb)^\top = (\frac{1}{3},\frac{1}{3},\frac{1}{3})$,
which yields the respective payoffs 
\[
\poa\nash = \frac{1 + \beta}{3}  \quad\text{and}\quad \pob\nash = \frac{1-\beta}{3}.
\]
To check the hypothesis of Proposition~\ref{prop:fp_opt}, 
let $q = (q_1, q_2, q_3)^\top \in \Sigb$, then
\begin{align*}
\bar A(q) 
&= \max \left\{ q_1+ \beta q_3, q_2 + \beta q_1, q_3 + \beta q_2 \right\}\\
&\geq \frac{1}{3} ( (q_1+ \beta q_3)+( q_2 + \beta q_1)+( q_3 + \beta q_2) )\\
& = \frac{1}{3} (q_1 + q_2 + q_3) (1 + \beta) \\
& = \frac{1+\beta}{3} \\
& = \poa\nash = \bar A(\nashb).
\end{align*}
Moreover, equality holds if and only if 
\[
  q_1+ \beta q_3 =  q_2 + \beta q_1 = q_3 + \beta q_2,
\]
which is equivalent to $q_1 = q_2 = q_3$, that is, $q = \nashb$. We conclude that 
$\bar A(q) > \bar A(\nashb)$ for all $q \in \Sigb\setminus\{\nashb\}$, and 
by a similar calculation, $\bar B(p) > \bar B (\nasha)$ for all $p \in \Siga\setminus\{\nasha\}$. 
As a corollary to Proposition~\ref{prop:fp_opt} we get the following result.

\begin{theorem}\label{thm:beta_payoff}
Consider the one-parameter family of bimatrix games $(A_\beta, B_\beta)$ 
in~\eqref{eq:shapley_family} for $\beta \in (0,1)$. 
Then any (non-stationary) FP orbit Pareto dominates constant Nash equilibrium play in the long run, 
that is, for large times $t$ we have
\[
\hpoa(t) > \poa\nash \quad \text{and} \quad \hpob(t) > \pob\nash.
\]
\end{theorem}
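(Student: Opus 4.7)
The plan is to derive the theorem as an essentially immediate corollary of Proposition~\ref{prop:fp_opt} combined with the dynamical analysis of this specific family. First, I would observe that the arithmetic-mean computation in the paragraph immediately preceding the theorem already verifies the hypothesis of Proposition~\ref{prop:fp_opt} in strict form: applying $\max\{u,v,w\} \geq (u+v+w)/3$ to the three entries of $A_\beta q$ gives $\bar A(q) \geq (1+\beta)/3 = \poa\nash$ with equality only at $q = \nashb$, and the analogous calculation applied to the row vector $pB_\beta$ gives $\bar B(p) \geq (1-\beta)/3 = \pob\nash$ with equality only at $p = \nasha$.

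Next I would invoke Theorem~\ref{thm:lim_av_payoff} to obtain $\hpoa(T) - \bar A(q(T)) \to 0$ and $\hpob(T) - \bar B(p(T)) \to 0$ as $T \to \infty$, which combined with the previous step already yields $\liminf_{T \to \infty} \hpoa(T) \geq \poa\nash$ and $\liminf_{T \to \infty} \hpob(T) \geq \pob\nash$ along every FP orbit (the content of Proposition~\ref{prop:fp_opt} specialised to the present family). To upgrade these to strict inequalities for all sufficiently large $t$, I would argue that along any non-stationary FP orbit the empirical past play $q(t)$ stays bounded away from $\nashb$ and $p(t)$ stays bounded away from $\nasha$. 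Granted this, continuity of $\bar A$ and $\bar B$ together with the strict inequalities of the first paragraph produces uniformly positive lower bounds on $\bar A(q(t)) - \poa\nash$ and $\bar B(p(t)) - \pob\nash$, which for large $t$ dominate the vanishing deficits $\hpoa(t) - \bar A(q(t))$ and $\hpob(t) - \bar B(p(t))$, giving the claimed strict inequalities.

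The main obstacle is precisely this last step: establishing that non-stationary FP orbits of the family $(A_\beta, B_\beta)$ remain bounded away from $\nash$ in empirical time average. To deal with it I would appeal to the detailed dynamical analysis of this Shapley-like family carried out in~\cite{Sparrow2007,VanStrien2010}, which shows that $\nash$ is an unstable fixed point of FP dynamics and that the non-trivial orbits of interest are asymptotically attracted to Shapley polygons lying strictly in the interior of the simplex, hence bounded uniformly away from $\nash$. With this dynamical input in hand, the strict Pareto domination claimed in the theorem follows at once by combining Theorem~\ref{thm:lim_av_payoff} with the strict inequalities of the first paragraph.
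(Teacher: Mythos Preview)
Your approach is essentially the paper's: the paper simply declares the theorem a corollary of Proposition~\ref{prop:fp_opt}, relying on the arithmetic-mean computation immediately preceding the statement to verify the hypothesis (with strict inequality away from $\nash$). Your first two paragraphs reproduce exactly this.

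Where you go beyond the paper is in your third paragraph, where you try to upgrade the asymptotic $\geq$ of Proposition~\ref{prop:fp_opt} to the strict $>$ for all large $t$ claimed in the theorem. The paper itself does not address this point; it is content with the corollary statement and leaves the strictness implicit. Your attempt to fill this in, however, has a genuine problem: you assert that every non-stationary FP orbit of $(A_\beta,B_\beta)$ stays bounded away from $\nash$, appealing to instability of $\nash$ and attraction to Shapley polygons. But the paper's own introduction explicitly states (citing \cite[Theorem~1.1]{VanStrien2010}) that in these Rock--Paper--Scissors--like games the limit cycle is \emph{not} globally attracting and that uncountably many orbits are still attracted to the Nash equilibrium. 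For such an orbit $(p(t),q(t))\to\nash$, your uniform lower bound on $\bar A(q(t))-\poa\nash$ fails, and your argument does not go through.

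So your proposal matches the paper's proof where the paper actually argues, and the extra rigour you attempt exposes a subtlety that the paper glosses over rather than resolves; your resolution of it, as stated, is incorrect.
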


In fact, one can say more: There is a $\beta \in (0,1)$
such that FP has an attracting closed orbit 
(the so-called \q{anti-Shapley orbit}~\cite{Sparrow2007,VanStrien2010})
along which FP Pareto dominates Nash equilibrium \emph{at all times}. 
In other words, both players are receiving a higher payoff than at Nash equilibrium at 
any time along this orbit. We omit the details of the proof: 
techniques developed in~\cite{Krishna1998,Rosenmuller1971} 
can be used to analyse FP along this orbit, whose existence
was shown in~\cite{Sparrow2007}.
In particular, the times spent in each region $\Regb{j} \times \Rega{i}$ along the orbit can be 
worked out explicitly, which can be directly applied to obtain average payoffs.

\begin{remark}
In fact, FP also improves upon the set of \q{correlated equilibria} 
in this family of games. 
The famous notion of correlated equilibrium, introduced in~\cite{Aumann1974, Aumann1987}, 
is defined as follows. 
A joint probability distribution $P = (p_{ij})$ over 
$S = \Pure{A} \times \Pure{B}$ is a 
\emph{correlated equilibrium (CE)} for the bimatrix game $(A,B)$ if 
\[
\sum_k a_{i'k} p_{ik} \leq \sum_k a_{ik} p_{ik} \quad \text{and} 
\quad \sum_l b_{lj'} p_{lj} \leq \sum_k b_{lj} p_{lj}
\]
for all $i,i' \in \Pure{A}$ and $j,j' \in \Pure{B}$. 
One interpretation of this notion is similar to that of the 
CCE (see paragraph after Definition~\ref{def:cce}), with 
the notion of \q{(unconditional) regret} replaced by the finer notion of \q{conditional regret}.
If we think of $P$ as the empirical distribution of play up to a certain time 
for two players involved in repeatedly or continuously playing a given game, 
then $P$ is a CE if neither player regrets
not having played a strategy $i'$ (or $j'$) whenever she actually played $i$ (or $j$). 
In other words, the average payoff to player A would not be higher, if she would
have played $i'$ at all times when she actually played $i$ throughout the history of play 
(assuming her opponent's behaviour unchanged), and the same for player B. 

One can check that the set of Nash equilibria is always contained
in the set of CE, which in turn is always contained in the set 
of CCE. In the game $(A_\beta,B_\beta)$ 
in Theorem~\ref{thm:beta_payoff}, the Nash equilibrium $\nash$ 
is also the unique CE, which can be checked by direct computation.
Hence our result shows that in this case, FP also improves upon 
CE in the long run.
\end{remark}

\section{FP can be worse than Nash equilibrium} 
\label{sec:fp_worse}

We have seen that in many games FP
improves upon Nash equilibrium in terms of payoff. 
Moreover, we have shown that for any bimatrix game 
with unique, completely mixed Nash equilibrium, 
linear equivalence can be used to obtain dynamically equivalent examples in which
FP Pareto dominates Nash equilibrium. 
In this section we investigate the converse possibility of 
FP having lower payoff than Nash equilibrium. Again 
we restrict our attention to $n\times n$ games with 
unique, completely mixed Nash equilibrium. 

Let us define the \emph{sub-Nash payoff cones}, 
the set of those mixed strategies of player A, 
for which the best possible payoff to player B is not greater than Nash equilibrium payoff,
\[
P_B^- = \{ p \in \Siga : \max_i (p B)_i \leq \max_i (\nasha B)_i  \}, 
\]
and similarly
\[
P_A^- = \{ q \in \Sigb : \max_j (A q)_j \leq \max_j (A \nashb)_j  \}. 
\]
By adding suitable constants to the player's payoff matrices we can assume 
without loss of generality that $\poa\nash = \pob\nash = 0$. 
Then one can see that 
\[
P_B^- = (B^\top)^{-1}(\R^n_-) \cap \Siga \quad \text{and} \quad P_A^- = A^{-1}(\R^n_-) \cap \Sigb,
\]
where $\R^n_-$ denotes the quadrant of $\R^n$ with all coordinates non-positive, 
and by $(B^\top)^{-1}$ and $A^{-1}$ we mean the pre-images 
under the linear maps $B^\top, A \colon \R^n \to \R^n$. 
Therefore, $P_B^-$ and $P_A^-$ are (closed) convex cones in $\Siga$ 
and $\Sigb$ with apexes $\nasha$ and $\nashb$ respectively. 

Now an orbit of FP is Pareto dominated by 
Nash equilibrium if and only if it (or its part for $t \geq t_0$ for some $t_0$) 
is contained in the interior of
$P_B^- \times P_A^-$. This shows that a
result like Theorem~\ref{thm:exist_le} with the roles of Nash equilibrium 
and FP reversed cannot hold: 
if a game has an FP orbit whose projections 
to $\Siga$ and $\Sigb$ are not both contained in some 
convex cones with apexes $\nasha$ and $\nashb$, then for any 
linearly equivalent game, along this orbit there are times at which one of the players 
enjoys higher payoff than Nash equilibrium payoff. 
In order to find FP orbits along which 
payoffs are permanently lower than Nash equilibrium payoff, 
one therefore needs to find orbits contained in a 
halfspace (whose boundary plane contains the Nash equilibrium).
The following lemma ensures that one can then obtain 
a linearly equivalent game with $P_B^- \times P_A^-$ containing this orbit. 

\begin{lemma} \label{lem:sub_nash}
Let $(A,B)$ be any $n \times n$ bimatrix game with 
unique, completely mixed Nash equilibrium $\nash$. 
Let $H_A$ and $H_B$ be open halfspaces such that $\nasha \in \partial H_A$ and 
$\nashb \in \partial H_B$. Let further $C_A$ and $C_B$ be closed convex polyhedral cones with
non-empty interior and apexes $\nasha$ and $\nashb$ respectively, such that
\begin{itemize}
\item $C_A \setminus \{\nasha\} \subset \Siga \cap H_A$ and 
$C_B \setminus \{\nashb\} \subset \Sigb \cap H_B$,
\item $C_A$ ($C_B$) contains exactly one of the line segments $L^B_i\setminus \{\nasha\}$  ($L^A_j \setminus \{\nashb\}$) in its interior,
\item $C_A$ ($C_B$) has exactly $n-1$ extreme rays, each lying in
the interior of one of $\Regb{j}$ ($\Rega{i}$), such that each $\Regb{j}$ ($\Rega{i}$) contains at most one such ray.
\end{itemize}
Then there exists a linearly equivalent game 
$(A', B')$, such that $P_{B'}^- = C_A$ and $P_{A'}^- = C_B$. 
\end{lemma}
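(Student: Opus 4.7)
The constructions of $B'$ with $P^-_{B'} = C_A$ and of $A'$ with $P^-_{A'} = C_B$ are symmetric, so I describe the plan only for $B'$. After a preliminary linear equivalence normalizing so that $\pob\nash = 0$, I would seek $B'$ in the form $b'_{ij} = b_{ij} + \mu_i$ for some $\mu \in \R^n$, imposing $\nasha \cdot \mu = 0$ to preserve the normalization. Then $P^-_{B'} = \{p \in \Siga : (pB)_j + p \cdot \mu \leq 0 \text{ for all } j\}$ is a cone with apex $\nasha$ defined by $n$ half-spaces, and the task is to choose $\mu$ so that this cone equals the prescribed $C_A$.

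Labelling the extreme rays of $C_A$ as $R^k = \{\nasha + t v^k : t \geq 0\}$, with $v^k$ in the interior of a distinct region $\Regb{j_k}$ for $k = 1,\ldots,n-1$ and $j_0$ the missing index (which by the alignment in hypothesis~2 coincides with the index $i$ of the $L^B_i$ sitting in the interior of $C_A$), I would solve the linear system
\[
v^k \cdot \mu = -(v^k B)_{j_k} \quad (k = 1, \ldots, n-1), \qquad \nasha \cdot \mu = 0.
\]
Because $C_A$ has non-empty interior in the $(n-1)$-dimensional simplex $\Siga$, the directions $v^1,\ldots,v^{n-1}$ are linearly independent in the tangent space $T = \{v \in \R^n : \sum_i v_i = 0\}$, and together with $\nasha \notin T$ form a basis of $\R^n$, so the system has a unique solution $\mu$. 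By construction, this $\mu$ makes each hyperplane $\{p : (pB')_{j_k} = 0\}$ contain the ray $R^k$.

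Verifying $C_A \subseteq P^-_{B'}$ is then straightforward: on the interior of $R^k$, $p$ lies in the interior of $\Regb{j_k}$, so $(pB')_{j_k}$ is the strict maximum among the $(pB')_l$, and it equals $0$ by construction. Convexity of $P^-_{B'}$ extends this to all of $C_A$. Moreover $(pB')_{j_0} < 0$ on each $R^k$ interior by strict maximality, and linearity promotes this to $(pB')_{j_0} \leq 0$ throughout $C_A$, so the $j_0$-th constraint is redundant on $C_A$.

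The reverse inclusion $P^-_{B'} \subseteq C_A$ is the delicate step, and is where the remaining two hypotheses do the real work. The containment $C_A \setminus \{\nasha\} \subset H_A$ keeps $P^-_{B'}$ pointed. The assumption that $L^B_{j_0}$ lies in the interior of $C_A$ --- along which the values $(pB')_{j_1}, \ldots, (pB')_{j_{n-1}}$ all coincide, since $L^B_{j_0}$ is the indifference ray among these $n-1$ strategies --- forces these common values to be strictly negative and thereby fixes the orientation of each half-space $\{(pB')_{j_k} \leq 0\}$ on the correct side of its boundary. Combining these orientations with the fact that each boundary hyperplane passes through the corresponding $R^k$, one identifies the intersection of the $n-1$ non-redundant half-spaces with the simplicial cone $C_A$. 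The hardest part of executing this plan rigorously is precisely this last step: one has to verify that a single scalar equation per extreme ray is enough to pin down the full facet structure of $C_A$, which requires careful use of the geometric compatibility between $C_A$ and the preference region arrangement encoded in the three hypotheses.
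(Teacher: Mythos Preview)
Your construction coincides with the paper's: choose one point on each of the $n-1$ extreme rays of the cone, and solve the linear system forcing $\bar B'$ (resp.\ $\bar A'$) to vanish there and at the Nash equilibrium. The paper writes this as $\bar A'(\nashb) = \bar A'(Q_1) = \cdots = \bar A'(Q_{n-1}) = 0$ for points $Q_i$ on the rays, which is equivalent to your system in the tangent directions $v^k$; the basis argument for unique solvability is identical.

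The only substantive difference is in closing the reverse inclusion $P^-_{B'} \subseteq C_A$, which you rightly identify as the delicate step and propose to handle by a direct orientation analysis of the half-spaces $\{(pB')_{j_k} \leq 0\}$, using the ray $L^B_{j_0}$ in the interior of $C_A$ to fix signs. The paper bypasses this with a short convexity argument: by construction $\bar B' = 0$ on $\partial C_A$ (using that $\bar B'$ is affine on each preference region and the level-set structure from the proof of Theorem~\ref{thm:exist_le}), so the sublevel set $\{\bar B' \le 0\} = P^-_{B'}$ is either $C_A$ or the closure of its complement in $\Siga$. Since $P^-_{B'}$ is convex (as an intersection of half-spaces) while the complement of a full-dimensional proper cone is not, the conclusion $P^-_{B'} = C_A$ follows immediately. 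Your route would also succeed, but the convexity observation replaces the facet-by-facet orientation check with a single line.
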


\begin{proof}[Proof of Lemma~\ref{lem:sub_nash}]
The proof follows the same line of argument as the proof of 
Theorem~\ref{thm:exist_le} and therefore we will refer to that proof.
Note that in the proof of Theorem~\ref{thm:exist_le}, to any given game
we constructed a linearly equivalent game with $P_A^- = \{E_B\}$ and $P_B^- = \{E_A\}$.
To prove the lemma, without loss of generality assume that $L^A_n \subset C_B$, 
which implies that 
$\partial C_B$ has non-empty intersection with the interior of each of $\Rega{i}$ for $i \neq n$.
We can then pick $n-1$ points $Q_i \in \partial C_B \cap \inter (\Rega{i})$ on the $n-1$ extreme rays 
and similarly
to the proof of Theorem~\ref{thm:exist_le} prescribe the $n$ linear equations
$\bar A' (\nashb) = \bar A' (Q_1) = \ldots = \bar A' (Q_{n-1})  = 0$, where $A'$ is again the
matrix obtained from $A$ by adding constants $c_1, \ldots, c_n$ to its columns.
This has a unique solution for $c = (c_1, \ldots, c_n)$, since $\nashb, Q_1, \ldots , Q_{n-1}$
form a basis for $\R^n$. Then, by construction, $\bar A'$ is $0$ on $\partial C_B$. 
Because of the structure of the level sets of $\bar A'$ worked out in the proof of 
Theorem~\ref{thm:exist_le}, this implies that either $C_B$ or the closure of its complement in $\Sigb$ is the set on which $\bar A' \leq 0 = \bar A'(\nashb)$, that is, $P_{A'}^-$. But since both $C_B$ and $P_{A'}^-$ are convex, it follows that $P_{A'}^- = C_B$, and analogously one can find a 
linearly equivalent matrix $B'$ so that $P_{B'}^- = C_A$.
\end{proof}
 
 \medskip

By Lemma~\ref{lem:sub_nash}, to find an example of a game with an 
orbit which is Pareto worse than Nash equilibrium, 
it suffices to find a game with an orbit whose projections to $\Siga$ and $\Sigb$ 
are completely contained in suitable convex cones with apexes $\nasha$ and $\nashb$ respectively.
One can then construct a linearly equivalent game, for which this 
orbit is actually contained in the sub-Nash payoff cones. 
We will demonstrate one such example in the $3 \times 3$ case, which we 
obtained by numerically randomly generating $3 \times 3$ games
and testing large numbers of initial conditions to detect orbits of the desired type. 

Observe that by convexity of the preference regions $\Rega{i}$, 
a halfspace in $\Sigb$ whose boundary
contains the (unique, completely mixed) Nash equilibrium 
contains at most two of the three rays $L^A_i$, $i=1,2,3$. 
The same holds for a halfspace in $\Siga$ and the rays $L^B_j$, $j=1,2,3$. 
Hence an orbit entirely contained in such halfspace never crosses at 
least one of these lines for each player. 

\begin{example}
Let the bimatrix game $(A,B)$ be given by
\begin{align*}
A = \begin{pmatrix}-1.353259 & -1.268538 &  2.572738 \\ 0.162237 & -1.800824 &  1.584291 \\ -0.499026 & -1.544578 & 1.992332 
\end{pmatrix}, ~
B = \begin{pmatrix}-1.839111 & -2.876997 & -3.366031 \\ -4.801713 & -3.854987 & -3.758662 \\ 6.740060 & 6.590451 & 6.898102
\end{pmatrix}.
\end{align*}
This bimatrix game has a unique Nash equilibrium $\nash$ with
\[
\nasha \approx (0.288 , 0.370 , 0.342 ),\quad \nashb \approx (0.335 , 0.327 , 0.338)^\top . 
\]
The matrices $A$ and $B$ are chosen in such a way that the Nash 
equilibrium payoffs are both normalised to zero: $\poa\nash = \pob\nash = 0$. 
Numerical simulations suggest that FP 
has a periodic orbit as a stable limit cycle, 
which attracts almost all initial conditions. 
This trajectory forms an octagon in the four-dimensional space $\Sigma = \Siga \times \Sigb$, 
it is depicted in Figure~\ref{fig:per_orbit}. 
The orbit follows an 8-periodic itinerary of the form
\[
(2,1) \rightarrow  (2,2) \rightarrow  (3,2) \rightarrow  (3,3) \rightarrow  
(1,3) \rightarrow  (1,2) \rightarrow  (1,1) \rightarrow (3,1) \rightarrow (2,1) .
\]
(That is, there is a strictly increasing sequence of times $(t_i)_{i \geq 1}$
such that $(p(t),q(t)) \in \Regb{2} \times \Rega{1}$ for $t \in (t_1,t_2)$,
$(p(t),q(t)) \in \Regb{2} \times \Rega{2}$ for $t \in (t_2,t_3)$, 
$(p(t),q(t)) \in \Regb{3} \times \Rega{2}$ for $t \in (t_3,t_4)$, etc.)
Note that the second player's best response never changes from 1 to 2, 
nor vice versa. Similarly, for player A the best response
never directly changes between 1 and 3 without an intermediate step through 2. 
Moreover, it can be seen from Figure~\ref{fig:per_orbit} that 
the projections of the periodic orbit to $\Siga$ and $\Sigb$ lie in halfplanes 
whose boundaries contain the points $\nasha$ and $\nashb$ respectively. 
Hence Lemma~\ref{lem:sub_nash} allows us
to choose the matrices $A$ and $B$ such that this orbit
lies completely in $P_B^- \times P_A^-$, so that the payoffs 
to both players are permanently worse than Nash equilibrium payoff. 
Figure~\ref{fig:per_orbit_po} shows the (negative) payoffs to both 
players along several periods of the orbit and the higher (zero) Nash equilibrium payoff. 

\begin{figure}[t]
\centering
\includegraphics[width = 0.7\textwidth, trim = 0 200 0 200]{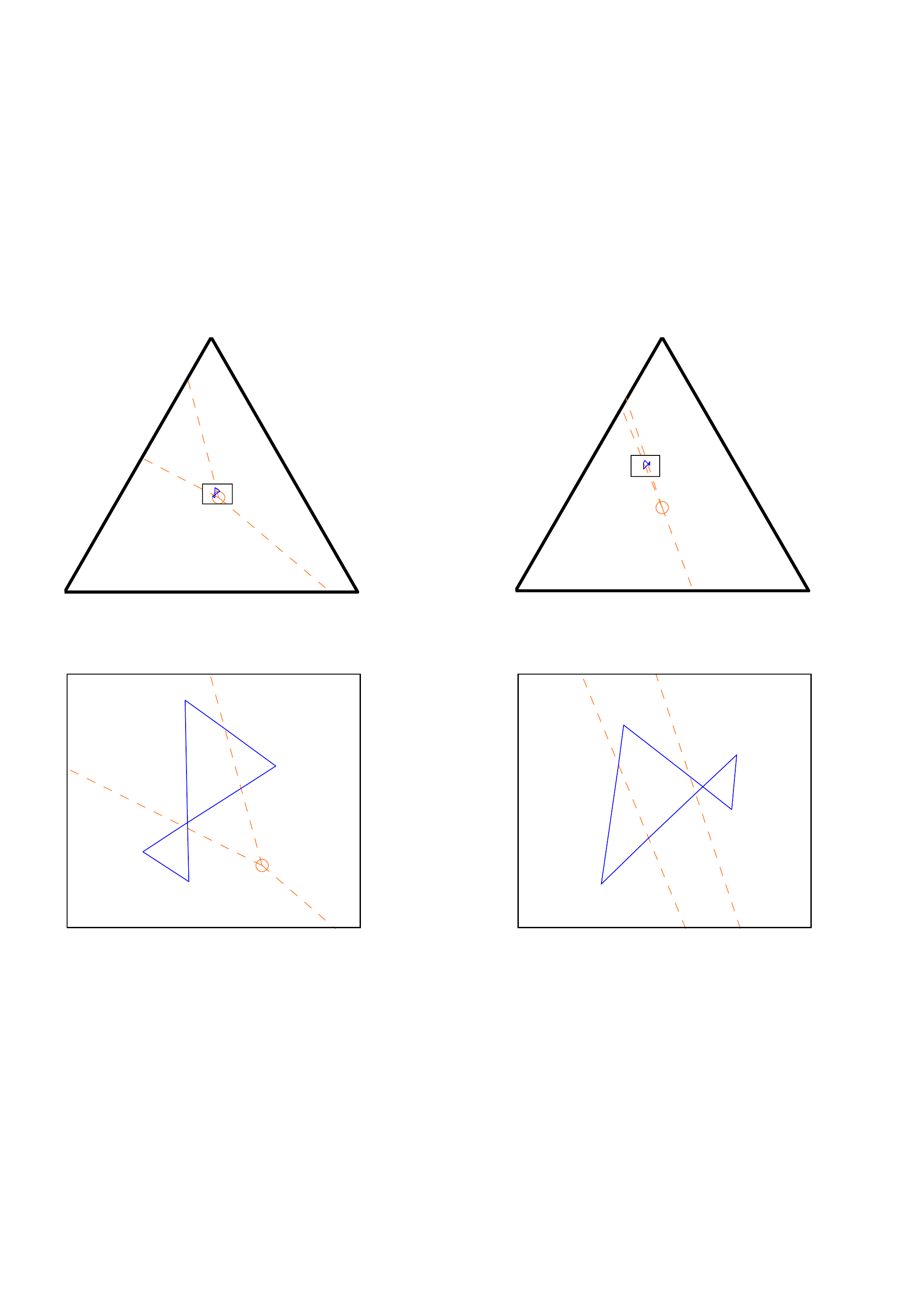}
\caption{Periodic orbit whose projections to $\Siga$ (left) and $\Sigb$ (right) are contained 
in convex cones with apexes $\nasha$ and $\nashb$ respectively. 
The dashed lines indicate the indifference lines of the players. 
Their intersections are the projections of the Nash equilibrium, $\nasha$ and $\nashb$.
For better visibility, the bottom row shows a zoomed version of the periodic orbit.}
\label{fig:per_orbit}
\end{figure}

\begin{figure}[t]
\centering
\includegraphics[width = \textwidth, trim = 50 270 50 250]{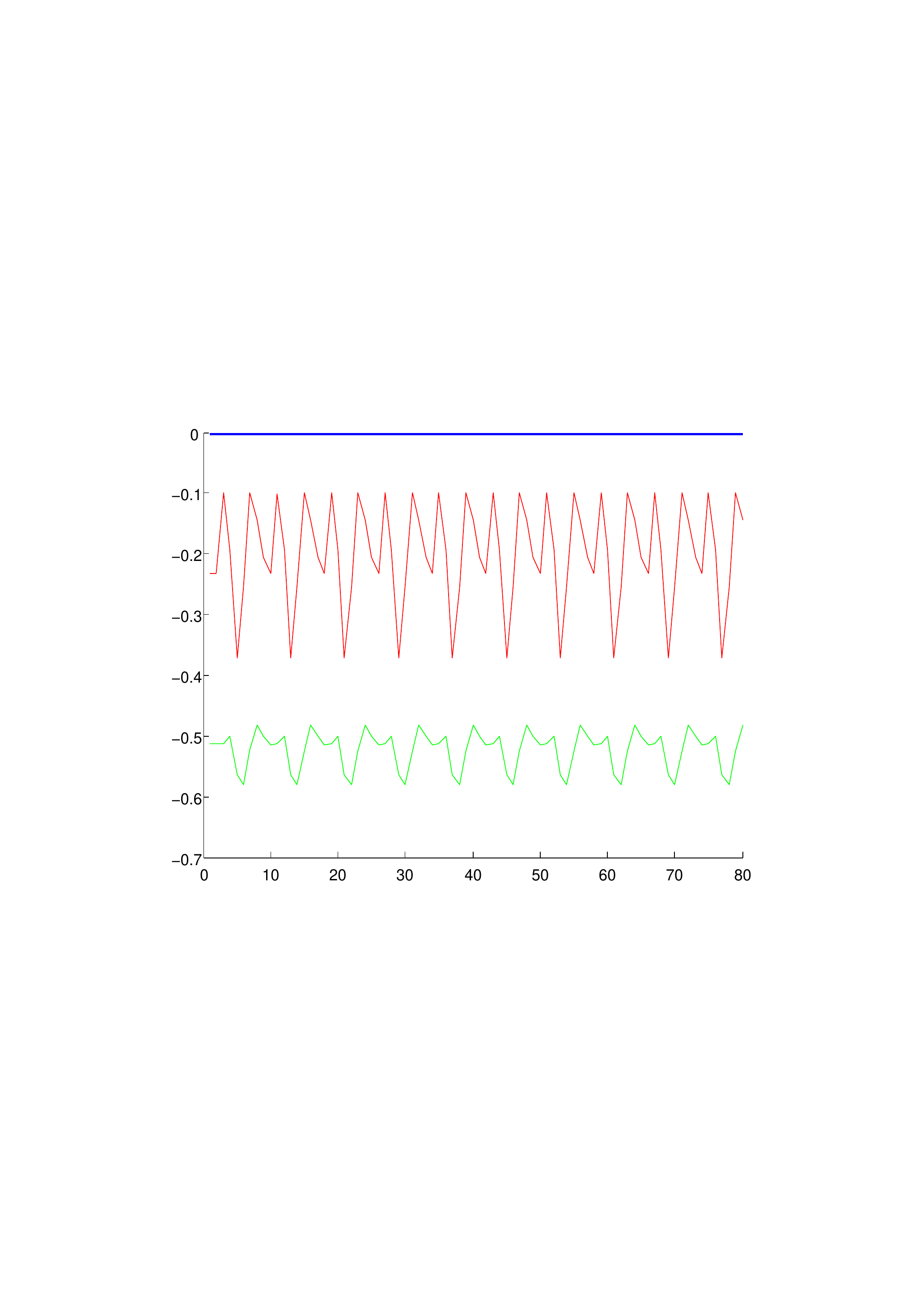}
\caption{Payoff along 10 periods of the periodic orbit contained in $P_B^- \times P_A^-$. 
Player A's payoff oscillates around $-0.5$, player B's payoff around $-0.25$. 
Nash equilibrium payoff is zero to both players.}
\label{fig:per_orbit_po}
\end{figure}
\end{example}

This example has been obtained through numerical experimentation. 
The difficulty in finding an example of a periodic orbit with the 
key property of lying in a convex cone with apex at the unique, completely mixed 
Nash equilibrium seems to suggest that such examples are 
relatively rare. For most games with unique, completely mixed Nash equilibrium, 
payoff along typical FP orbits either Pareto dominates
Nash equilibrium payoff or at least improves upon it along parts of the orbit. 
We formulate the following two conjectures.

\begin{conjecture}
Bimatrix games with unique, completely mixed Nash equilibrium, 
where Nash equilibrium Pareto dominates typical FP orbits are rare.
To be precise, within the space of $n\times n$ games with entries in $[0,1]$, 
those where typical FP orbits are Pareto 
dominated by Nash equilibrium form a set with at most Lebesgue measure $0.01$.
\end{conjecture}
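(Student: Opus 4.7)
The plan is to reduce the conjecture to a rigorous Monte Carlo estimate, exploiting the geometric characterization of Lemma~\ref{lem:sub_nash}. The key observation is that a game $(A,B)$ with unique, completely mixed Nash equilibrium $\nash$ lies in the ``bad set'' $S$ only if typical FP trajectories project, for all sufficiently large $t$, into closed convex cones with apexes $\nasha$ and $\nashb$ in $\Siga$ and $\Sigb$ respectively. Since each preference region $\Rega{i}$, $\Regb{j}$ is a closed convex polytope, and such a cone contains at most $n-1$ of them in its interior, the best-response itinerary of any orbit witnessing $(A,B) \in S$ must fail to cross at least one indifference hyperplane through $\nasha$ (respectively $\nashb$).

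First, I would formalise a decidable numerical criterion: given $(A,B)$, run FP from a grid of initial conditions for a long time $T$, and discard $(A,B)$ from the candidate bad set whenever $(p(t), q(t))$ for $t \in [T/2, T]$ crosses every indifference hyperplane through $(\nasha, \nashb)$. By Lemma~\ref{lem:sub_nash}, this test yields no false positives in the limit $T \to \infty$.

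Second, I would combine Hoeffding's inequality with an $\varepsilon$-net argument on the compact parameter space $[0,1]^{2n^2}$. Draw $N$ games uniformly at random; Lipschitz dependence of the finite-time FP dynamics on the matrix entries provides a modulus of continuity for the truncated indicator ``orbit is Pareto-dominated by Nash equilibrium''. For $N$ sufficiently large the empirical fraction of bad games approximates the Lebesgue measure of $S$ to arbitrary precision with exponentially high confidence, which, combined with a rigorous upper bound on the truncation error, would yield the conjectured $0.01$ bound.

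The principal obstacle will be making the truncation step rigorous: finite-time simulation cannot directly decide membership in $S$, and games lying near $\partial S$ will have orbits approaching or escaping the cones slowly. One therefore needs a quantitative, uniform version of Theorem~\ref{thm:lim_set} giving explicit convergence rates in terms of the payoff entries, together with a careful analysis of borderline games whose FP $\omega$-limit set is tangent to an indifference hyperplane. It is unclear whether current techniques suffice for such uniform rates, and this appears to be the genuine analytic content hidden behind what otherwise looks like an empirical claim.
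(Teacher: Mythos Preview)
The statement you are attempting to prove is a \emph{conjecture}: the paper does not prove it and offers no proof sketch. There is therefore nothing to compare your proposal against; the authors present it explicitly as an open problem motivated by numerical experimentation, not as a theorem.

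That said, your proposed strategy has a fundamental gap independent of the truncation issue you already flag. A Monte Carlo estimate combined with Hoeffding's inequality yields only a \emph{probabilistic} statement: with confidence $1-\delta$ the empirical fraction lies within $\varepsilon$ of the true Lebesgue measure. Even if you draw $N$ games, observe an empirical bad fraction of $0.001$, and compute that this is within $0.005$ of the truth with probability $1-10^{-100}$, you have not proved that the measure of the bad set is at most $0.01$. You have proved that a certain random experiment returns a misleading answer with probability at most $10^{-100}$, which is a statement about the sampling procedure, not about the set. No finite sample, however large, converts a concentration inequality into a deterministic measure bound. The $\varepsilon$-net and Lipschitz-continuity refinements do not help here: they control the approximation error between the finite-time truncated indicator and its limit, but the passage from ``empirical average over random samples'' to ``Lebesgue measure'' still rests on a probabilistic inequality.

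A genuine proof would require either an explicit geometric or combinatorial description of the bad set allowing its measure to be bounded analytically, or a computer-assisted argument in which the parameter cube $[0,1]^{2n^2}$ is partitioned into finitely many certified boxes (interval arithmetic, validated numerics) rather than sampled randomly. Your outline contains neither ingredient, and the paper does not supply one either; this is why the authors state the claim as a conjecture.
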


\begin{conjecture} For bimatrix games with unique, completely mixed Nash equilibrium 
and certain transition combinatorics (see~\cite{Ostrovski2010}), 
Nash equilibrium does not Pareto dominate typical FP orbits. 
In particular, this is the case if $\BRa(e_j) \neq \BRa(e_{j'})$ for all $j \neq j'$ and 
$\BRb(e_i) \neq \BRb(e_{i'})$ for all $i \neq i'$.
\end{conjecture}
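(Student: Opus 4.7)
The plan is to combine the geometric dichotomy from Section~\ref{sec:fp_worse} with a combinatorial analysis of FP itineraries. Recall that NE Pareto dominates an FP orbit $(p(t),q(t))$ in the long run if and only if the orbit lies eventually in $P_B^- \times P_A^-$, where by Lemma~\ref{lem:sub_nash} these sub-Nash cones can, up to linear equivalence, be prescribed as any closed convex polyhedral cones satisfying the listed conditions. Two features are essential: each cone lies in an open halfspace whose boundary passes through the corresponding NE projection, and each contains \emph{exactly one} of the rays $L^A_j$ (resp.\ $L^B_i$) from Lemma~\ref{lem:inteq}. The conjecture therefore reduces to showing that, under the bijection hypothesis, no typical FP orbit can be asymptotically trapped in halfspaces of both $\Siga$ (with $\nasha$ on the boundary) and $\Sigb$ (with $\nashb$ on the boundary) simultaneously.

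First I would unpack the hypothesis $\BRa(e_j)\neq\BRa(e_{j'})$ for $j\neq j'$: restricted to the $n$ vertices of $\Sigb$ it defines a permutation $\sigma\colon\Pure{B}\to\Pure{A}$, so each vertex $e_j$ lies in the relative interior of a distinct region $\Rega{\sigma(j)}$. Together with the analogous permutation $\tau$ from $\BRb(e_i)\neq\BRb(e_{i'})$, this forces a maximally rich combinatorial structure: every preference region contains a pure-strategy vertex in its interior, every indifference hyperplane $\indiffa{i}{i'}$ and $\indiffb{j}{j'}$ is actually present and separates a full-dimensional pair of regions, and each ray $L^A_k$ truly emanates from $\nashb$ into the interior of $\Sigb$. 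I would then aim to show, in the spirit of~\cite{Ostrovski2010}, that the transition graph on the boxes $\Regb{j}\times\Rega{i}$ is strongly connected and admits no non-trivial recurrent subset whose projections confine to halfspaces through $\nasha$ and $\nashb$.

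The main obstacle is rigorously ruling out such halfspace-trapped $\omega$-limit sets. My approach would be by contradiction: suppose a typical orbit has its $q$-projection eventually in an open halfspace $H_B$ with $\nashb\in\partial H_B$. Since $q(t)=\frac{1}{t}\int_0^t y(s)\,ds$ with $y(s)=\BRb(p(s))\in\Pure{B}$, the empirical measure of $y$ must be supported, up to vanishing error, on the vertices $e_j\in\overline{H_B}$, a proper subset of $\Pure{B}$. Via the bijection $\tau$ this forces $p(t)$ to visit only a proper subset of the regions $\Regb{j}$, and iterating using the strong connectedness of the transition graph should collapse the admissible asymptotic support to $\nash$ itself, contradicting non-convergence to NE. The subtleties I expect to wrestle with are (i) handling orbits whose empirical measure concentrates near $\partial H_B$, where the best response to $p(s)$ may straddle both sides, and (ii) making the word ``typical'' precise, presumably by excluding the stable manifold of $\nash$ together with a measure-zero set of degenerate initial conditions. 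Once the confinement is ruled out, Theorem~\ref{thm:lim_av_payoff} translates this back into the statement that the average payoffs $\hpoa(t),\hpob(t)$ cannot both remain strictly below $\poa\nash,\pob\nash$ along a typical orbit, as required.
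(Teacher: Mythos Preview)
The statement you are attempting to prove is labeled as a \emph{Conjecture} in the paper, and the paper supplies no proof of it at all. The authors motivate it only by numerical experimentation and a heuristic reference to transition combinatorics; there is nothing in the paper against which to compare your argument.

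Regarding your approach itself, the geometric reduction is correct: if NE Pareto dominates an orbit then eventually $(p(t),q(t))\in P_B^-\times P_A^-$, and since these are convex cones with apexes $\nasha,\nashb$, each projection is confined to a halfspace through the relevant NE projection. Your bijection reading of the hypothesis is also fine. The genuine gap is the step where, from $q(t)=\tfrac{1}{t}\int_0^t y(s)\,ds\in H_B$, you conclude that the empirical distribution of $y(s)$ is supported ``up to vanishing error'' on the vertices $e_j\in\overline{H_B}$. This does not follow: the condition $q(t)\in H_B$ constrains only a single linear functional of the empirical distribution, not its support. In a $3\times3$ game with $\nashb=(\tfrac13,\tfrac13,\tfrac13)$ and $H_B=\{q_1<\tfrac13\}$, for instance, the orbit can play $y(s)=e_1\notin\overline{H_B}$ a fixed $30\%$ of the time while $q(t)$ stays in $H_B$ forever. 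Consequently the iterated collapse of the support via the bijections $\sigma,\tau$ never begins, and the strong-connectedness argument has nothing to work with. The paper's own numerical example exhibits exactly this phenomenon (an orbit trapped in halfspaces while all pure best responses are played infinitely often), and it fails the bijection hypothesis only because $\BRa(e_2)=\BRa(e_3)$; your argument as written does not exploit the bijection hypothesis in any way that would distinguish it from that example.
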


Indeed, we could strengthen the above conjecture to the following statement.

\begin{conjecture}
 For \q{most} bimatrix games with unique, completely mixed Nash equilibrium, 
 typical FP orbits dominate Nash equilibrium in terms of average payoff. 
 In particular, this is the case under certain assumptions on the transition combinatorics
 of the game; for instance, if each pure strategy invokes a distinct pure best response 
 (as in the previous conjecture). 
\end{conjecture}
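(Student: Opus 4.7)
The plan is to combine the limit-set characterisation of Theorem~\ref{thm:lim_set} with the belief-affirming identity of Theorem~\ref{thm:lim_av_payoff} to reduce the conjecture to a geometric question about the maximal-payoff functions $\bar A$ and $\bar B$ along the accumulation set of a typical FP orbit. By Theorem~\ref{thm:lim_av_payoff},
\[
\liminf_{T \to \infty} \hpoa(T) = \liminf_{T \to \infty} \bar A(q(T)),
\]
and analogously for player B. Consequently, FP dominates Nash equilibrium in long-run average payoff precisely when the accumulation set of $q(T)$ in $\Sigb$ meets $\{\bar A > \bar A(\nashb)\}$ with positive asymptotic density, and symmetrically for $p(T)$.

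First I would isolate the sole obstruction. As observed in Section~\ref{sec:fp_worse}, a failure of this density condition forces the accumulation set of $q(T)$ to lie in a closed convex cone with apex $\nashb$ on which $\bar A \leq \bar A(\nashb)$; by convexity of the preference regions $\Rega{i}$ such a cone must be contained in a closed halfspace through $\nashb$. Under the combinatorial hypothesis $\BRa(e_j) \neq \BRa(e_{j'})$ for all $j \neq j'$, the pure strategies $e_1,\ldots,e_n$ of player B lie in $n$ distinct preference regions, so the regions $\Rega{i}$ are arranged \q{rotationally} around $\nashb$. I would argue that any FP $\omega$-limit set whose projection to $\Sigb$ lies in a halfspace through $\nashb$ must entirely avoid at least one preference region, and then derive a contradiction with the full itinerary that the transition combinatorics of~\cite{Ostrovski2010} forces for typical initial conditions. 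A symmetric argument handles $p(T)$, yielding the qualitative (``in particular'') half of the conjecture.

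For the ``most games'' measure statement, I would parametrise the space of $n \times n$ bimatrix games with unique completely mixed equilibrium by the entries of $(A,B)$, and show that the exceptional set -- games admitting an FP $\omega$-limit set trapped in a sub-Nash halfspace -- is contained in a semi-algebraic set cut out by finitely many polynomial inequalities encoding the alignment of such a halfspace with a face of some preference region. This is a positive-codimension condition in a suitable reparametrisation, and combined with explicit payoff computations along Shapley polygons of the type carried out in Section~\ref{sec:ex_family} via the techniques of~\cite{Krishna1998,Rosenmuller1971}, it should yield a quantitative upper bound on the Lebesgue measure of the exceptional set.

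The main obstacle is controlling FP \emph{globally} rather than merely ruling out static halfspace traps: one must exclude drifting or aperiodic orbits whose time-average of $\bar A$ happens to fall below $\bar A(\nashb)$ despite the accumulation set being generically arranged around $\nashb$. The hard part will be producing a uniform positive lower bound on the asymptotic time-average $\bar A(q(T)) - \bar A(\nashb)$ that is stable under perturbations of the game entries; without such a bound the measure estimate degenerates. A plausible route is a Lyapunov-type construction on the set of CCE adapted to the rotational structure of the preference regions, together with the chaotic-dynamics analysis of~\cite{Sparrow2007,VanStrien2010} to handle the more complicated attractors that can arise in higher-dimensional cases.
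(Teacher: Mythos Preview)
The statement you are trying to prove is a \emph{conjecture} in the paper, not a theorem: the authors state it as an open problem and offer no proof. So there is no ``paper's own proof'' to compare against; your proposal is an attempt at something the paper explicitly leaves unresolved.

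Evaluated on its own terms, the proposal has two genuine gaps beyond the obstacles you already flag. First, your reduction via Theorem~\ref{thm:lim_av_payoff} is not quite right: since $\hpoa(T)-\bar A(q(T))\to 0$, the condition $\hpoa(T)>\poa\nash$ for all large $T$ is equivalent to $\bar A(q)>\bar A(\nashb)$ on the \emph{entire} $\omega$-limit set of $q(T)$, not merely to the accumulation set meeting $\{\bar A>\bar A(\nashb)\}$ with ``positive asymptotic density''. This matters because the paper's notion of domination (cf.\ Theorem~\ref{thm:beta_payoff}) is pointwise in $T$ for large $T$, so a single limit point with $\bar A\le\bar A(\nashb)$ destroys the conclusion.

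Second, and more seriously, the step ``any FP $\omega$-limit set whose projection to $\Sigb$ lies in a halfspace through $\nashb$ must entirely avoid at least one preference region'' does not follow from the combinatorial hypothesis. All $n$ preference regions $\Rega{i}$ meet at $\nashb$, so any halfspace whose boundary contains $\nashb$ intersects every $\Rega{i}$ nontrivially; an orbit trapped in that halfspace can still visit every region near the apex. What the paper's Section~\ref{sec:fp_worse} shows is only that such a halfspace misses at least one of the \emph{rays} $L^A_i$, which is a much weaker statement and does not by itself contradict any itinerary. The numerical example in that section exhibits precisely an 8-periodic orbit passing through all preference regions while remaining in a halfspace; it fails your combinatorial hypothesis (one checks $\BRa(e_2)=\BRa(e_3)$ there), but your argument as written does not actually use the hypothesis to rule this configuration out. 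To make progress you would need a quantitative statement tying the rotational arrangement of the $\Rega{i}$ to a lower bound on how far a recurrent FP orbit must swing around $\nashb$, and no such result is currently available.
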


\section{Concluding remarks on FP performance}
\label{sec:conclusion_fp_performance}

Conceptually, the overall observation is that playing Nash equilibrium might not be an advantage
over playing according to some learning algorithm (such as FP) in a wide range of games, 
in particular in many common examples of games occurring in the literature. Even in cases
where FP does not dominate Nash equilibrium at all times, 
it might still be preferable in terms of time-averaged payoff. 
In contrast, the previous section shows that there 
are examples in which Nash equilibrium indeed Pareto dominates FP, 
but the restrictive nature of the example suggests that this situation is quite rare. 

Conversely, the discussion also shows that certain notions of game equivalence 
(for instance, linear equivalence, or the weaker best and better response equivalences, 
see~\cite{Morris2004,Moulin1978}), which are popular
in the literature on learning dynamics, are not meaningful in an economic context as 
they do not preserve essential 
features of the payoff structure of games, even though they preserve 
Nash equilibria (and other notions of equilibrium) 
and conditional preferences of the players. 
While some dynamics (in particular, FP dynamics or its autonomous 
version, the best response dynamics \cite{Gilboa1991,Matsui1992,Hofbauer1995}) 
are invariant under all of
these equivalence relations, the actual payoffs 
along their orbits and the payoff 
comparison of different orbits can strongly 
depend on the chosen representative bimatrix, 
as becomes apparent from Theorem~\ref{thm:exist_le}.
This is to some extent analogous to the situation in the 
classical example of the \q{prisoner's dilemma} given by the bimatrix
\[
A = \begin{pmatrix} 3 & 0  \\ 5 & 1 \end{pmatrix}, 
\quad B = \begin{pmatrix} 3 & 5  \\ 0 & 1 \end{pmatrix}.
\]
Under linear equivalence, this corresponds to the bimatrix game
\[
\tilde A = \begin{pmatrix} 0 & 0  \\ 2 & 1 \end{pmatrix}, 
\quad \tilde B = \begin{pmatrix} 0 & 2  \\ 0 & 1 \end{pmatrix},
\]
which shares all essential features such as equilibria, 
best response structures, etc with the prisoner's dilemma. 
Both games are dynamically identical, with all FP
orbits converging along straight lines to the unique pure Nash equilibrium 
$(2,2)$. However, the second game does not 
constitute a prisoner's dilemma in the classical sense: 
whereas in the prisoner's dilemma the Nash equilibrium is Pareto 
dominated by the (dynamically irrelevant) strategy profile $(1,1)$, 
in the second game this is not the case and no \q{dilemma} occurs.

Theorem~\ref{thm:exist_le} can be interpreted in a similar vain: 
linear equivalence turns out to be sufficiently coarse,
so that by changing the representative bimatrix inside an 
equivalence class, one can create certain regions in $\Sigma$ 
in which payoff is arbitrarily high in comparison to the payoff 
at the unique Nash equilibrium. Since FP orbits
remain unchanged, this can be done in such a way that a given 
periodic orbit lies completely or predominantly 
in these desired \q{high payoff portions} of $\Sigma$. On the other hand,
it can be seen from the proof that the conditions for this 
to happen are not at all exceptional. Consequently, it could be argued that in many games of interest the
assumption that Nash equilibrium play is the most desirable outcome might not hold and a more dynamic view of
\q{optimal play} might be reasonable.

\bibliographystyle{abbrv}
\bibliography{../BibTex/library}

\begin{thebibliography}{10}

\bibitem{Aumann1974}
R.~J. Aumann.
\newblock {Subjectivity and correlation in randomized strategies}.
\newblock {\em J. Math. Econom.}, 1:67--96, 1974.

\bibitem{Aumann1987}
R.~J. Aumann.
\newblock {Correlated equilibrium as an expression of Bayesian rationality}.
\newblock {\em Econometrica}, 55(1):1--18, 1987.

\bibitem{Berger2005}
U.~Berger.
\newblock {Fictitious play in $2 \times n$ games}.
\newblock {\em J. Econ. Theory}, 120(2):139--154, 2005.

\bibitem{Berger2007}
U.~Berger.
\newblock {Two more classes of games with the continuous-time fictitious play
  property}.
\newblock {\em Game. Econ. Behav.}, 60(2):247--261, 2007.

\bibitem{Berger2008}
U.~Berger.
\newblock {Learning in games with strategic complementarities revisited}.
\newblock {\em J. Econ. Theory}, 143(1):292--301, 2008.

\bibitem{Blackwell1954}
D.~Blackwell.
\newblock {Controlled random walks}.
\newblock In {\em Proceedings of the International Congress of Mathematicians},
  volume~3, pages 336--338, 1954.

\bibitem{Brown1949}
G.~W. Brown.
\newblock {Some notes on computation of games solutions}.
\newblock Technical report, Report P-78, The Rand Corporation, 1949.

\bibitem{Brown1951}
G.~W. Brown.
\newblock {Iterative solution of games by fictitious play}.
\newblock In {\em Activity Analysis of Production and Allocation}, volume~13,
  pages 374--376. John Wiley \& Sons, New York, 1951.

\bibitem{Foster1997}
D.~P. Foster and R.~V. Vohra.
\newblock {Calibrated learning and correlated equilibrium}.
\newblock {\em Game. Econ. Behav.}, 21(1-2):40--55, 1997.

\bibitem{Fudenberg1995}
D.~Fudenberg and D.~K. Levine.
\newblock {Consistency and cautious fictitious play}.
\newblock {\em J. Econ. Dyn. Control}, 19(5-7):1065--1089, 1995.

\bibitem{Fudenberg1998}
D.~Fudenberg and D.~K. Levine.
\newblock {\em {The Theory of Learning in Games}}.
\newblock MIT Press Series on Economic Learning and Social Evolution, 2. MIT
  Press, Cambridge, MA, 1998.

\bibitem{Gaunersdorfer1995}
A.~Gaunersdorfer and J.~Hofbauer.
\newblock {Fictitious play, Shapley polygons, and the replicator equation}.
\newblock {\em Game. Econ. Behav.}, 11(2):279--303, 1995.

\bibitem{Gilboa1991}
I.~Gilboa and A.~Matsui.
\newblock {Social stability and equilibrium}.
\newblock {\em Econometrica}, 59(3):859--867, 1991.

\bibitem{Hannan1957}
J.~Hannan.
\newblock {Approximation to Bayes risk in repeated play}.
\newblock {\em Contributions to the Theory of Games}, 3:97--139, 1957.

\bibitem{Harris1998}
C.~Harris.
\newblock {On the rate of convergence of continuous-time fictitious play}.
\newblock {\em Game. Econ. Behav.}, 22(2):238--259, 1998.

\bibitem{Hart2005}
S.~Hart.
\newblock {Adaptive heuristics}.
\newblock {\em Econometrica}, 73(5):1401--1430, 2005.

\bibitem{Hart2000}
S.~Hart and A.~Mas-Colell.
\newblock {A simple adaptive procedure leading to correlated equilibrium}.
\newblock {\em Econometrica}, 68(5):1127--1150, 2000.

\bibitem{Hart2001a}
S.~Hart and A.~Mas-Colell.
\newblock {A general class of adaptive strategies}.
\newblock {\em J. Econ. Theory}, 98(1):26--54, 2001.

\bibitem{Hofbauer1995}
J.~Hofbauer.
\newblock {Stability for the best response dynamics}.
\newblock {\em Mimeo}, 1995.

\bibitem{Krishna1998}
V.~Krishna and T.~Sj\"{o}str\"{o}m.
\newblock {On the convergence of fictitious play}.
\newblock {\em Math. Oper. Res.}, 23(2):479 -- 511, 1998.

\bibitem{Matsui1992}
A.~Matsui.
\newblock {Best response dynamics and socially stable strategies}.
\newblock {\em J. Econ. Theory}, 57(2):343--362, 1992.

\bibitem{Monderer1997}
D.~Monderer, D.~Samet, and A.~Sela.
\newblock {Belief affirming in learning processes}.
\newblock {\em J. Econ. Theory}, 73(2):438--452, 1997.

\bibitem{Morris2004}
S.~Morris and T.~Ui.
\newblock {Best response equivalence}.
\newblock {\em Game. Econ. Behav.}, 49(2):260--287, 2004.

\bibitem{Moulin1978}
H.~J. Moulin and J.-P. Vial.
\newblock {Strategically zero-sum games: the class of games whose completely
  mixed equilibria cannot be improved upon}.
\newblock {\em Internat. J. Game Theory}, 7(3-4):201--221, 1978.

\bibitem{Ostrovski2010}
G.~Ostrovski and S.~van Strien.
\newblock {Piecewise linear Hamiltonian flows associated to zero-sum games:
  transition combinatorics and questions on ergodicity}.
\newblock {\em Regul. Chaotic Dyn.}, 16(1-2):129--154, 2011.

\bibitem{Rosenmuller1971}
J.~Rosenm\"{u}ller.
\newblock {\"{U}ber Periodizit\"{a}tseigenschaften spieltheoretischer
  Lernprozesse}.
\newblock {\em Z. Wahrscheinlichkeit.}, 17(4):259--308, 1971.

\bibitem{Shapley1964}
L.~S. Shapley.
\newblock {Some topics in two-person games}.
\newblock {\em Advances in Game Theory}, 52:1--29, 1964.

\bibitem{Sparrow2007}
C.~Sparrow, S.~van Strien, and C.~Harris.
\newblock {Fictitious play in $3\times 3$ games: the transition between
  periodic and chaotic behaviour}.
\newblock {\em Game. Econ. Behav.}, 63(1):259--291, 2008.

\bibitem{Sydsaeter2008}
K.~Sydsaeter and P.~Hammond.
\newblock {\em {Essential Mathematics for Economic Analysis}}.
\newblock Prentice Hall, 3rd edition, 2008.

\bibitem{VanStrien2010}
S.~van Strien and C.~Sparrow.
\newblock {Fictitious play in $3 \times 3$ games: chaos and dithering
  behaviour}.
\newblock {\em Game. Econ. Behav.}, 73(1):262--286, 2011.

\bibitem{Young2005b}
H.~P. Young.
\newblock {\em {Strategic Learning and Its Limits (Arne Ryde Memorial Lectures
  Series)}}.
\newblock Oxford University Press, USA, 2005.

\end{thebibliography}

\end{document}